%% file: project-SB-arvix-twocol.tex
\documentclass[%
aps,
prl,
reprint,
superscriptaddress,
nofootinbib,
amsmath,amssymb,
floatfix,
]{revtex4-2}

\usepackage{amsthm,mathtools}

\usepackage{graphicx}
\usepackage{dcolumn}
\usepackage{bm}
\usepackage[dvipsnames]{xcolor}
\usepackage{mathrsfs}
\usepackage{nicefrac}
\usepackage{orcidlink}

\usepackage{empheq} 

\newcommand{\dd}{\mathrm{d}}
\newcommand{\eq}[1]{Eq.~\eqref{#1}}
\newcommand{\gtilde}{\widetilde\gamma}

\newcommand{\Oh}{\mathcal O}

\newtheorem{theorem}{Theorem}
\newtheorem{lemma}[theorem]{Lemma}
\newcommand{\prlsection}[1]{\medskip\par\noindent\emph{#1}---}
\newcounter{prlendmattersubsection}
\newcommand{\prlendmattersection}[1]{%
  \refstepcounter{section}%
  \setcounter{prlendmattersubsection}{0}%
  \medskip\par\noindent\emph{#1}---\ignorespaces}
\newcommand{\prlendmatterheading}[1]{%
  \refstepcounter{prlendmattersubsection}%
  \medskip\par\noindent
  \emph{(\roman{prlendmattersubsection}) #1}---\ignorespaces}

\begin{document}

\preprint{APS/123-QED}

\title{Homogeneous and Isotropic Linearized Gravity as a Caldeira--Leggett System}

\author{Pelayo V. Calzada\,\orcidlink{0009-0006-3133-9359}}
\affiliation{Departamento de Física, Universidad de Alicante, Campus de San Vicente del Raspeig, E-03690 Alicante, Spain}
\author{Pedro Bargueño\,\orcidlink{0000-0001-5453-9042}}
\affiliation{Departamento de Física, Universidad de Alicante, Campus de San Vicente del Raspeig, E-03690 Alicante, Spain}

\begin{abstract}
We show that the homogeneous and isotropic sector of linearized gravity coupled to arbitrary additional degrees of freedom takes the Caldeira–Leggett form, provided their action contains no metric time derivatives at second order in perturbations. These additional degrees of freedom constitute the model’s harmonic reservoir. Eliminating them yields a generalized Langevin equation for the metric perturbation, with their influence condensed in the reservoir’s spectral density. This equivalence also yields a natural setting for reduced quantization of the perturbation and stochastic gravitational dynamics. We analyze the resulting dynamics with Laplace methods, extending results presented in other open-system settings.
\end{abstract}

\maketitle

\prlsection{Introduction}%
In many physical situations, the dynamics of a system of interest is affected by degrees of freedom (DOFs) that lie beyond the scope of the analysis. The theory of open systems characterizes their influence through reduced, effective formulations~\cite{Weiss2012,Breuer:2007juk}. A standard approach within the Hamiltonian formalism is provided by system-plus-reservoir models, where the unresolved DOFs are represented by an interacting reservoir, the Caldeira--Leggett model being the paradigmatic realization~\cite{Ullersma:1966, Zwanzig:1973, Caldeira:1981rx, Caldeira:1982uj}. Given the widespread success of this framework across diverse areas of physics and chemistry, including condensed-matter physics \cite{Hanggi:1990zz,Blatter:1994zz,Lindoy_2023}, quantum optics \cite{Louisell1973,Carmichael1993,Gardiner2004}, or biophysical chemistry \cite{Tanimura2006,Ishizaki2009,Cao2020}, it is compelling to explore this approach in gravitation. There, DOFs absent from the classical description, quantum-gravitational or otherwise, leave an imprint on the geometry, which the reservoir may effectively encode.

Several implementations related to this idea have already appeared in gravitational contexts. A Caldeira--Leggett description of spacetime foam can be found in Refs.~\cite{Garay:1998wk,Garay:1998as,Garay:1999cy}, where its effect on low-energy fields on flat spacetime was modeled through a weakly interacting massless field, equivalently a harmonic reservoir, guided by the gas-of-virtual-black-holes picture from the gravitational path integral~\cite{Hawking:1995ag,Addazi:2016jfq}. In loop quantum gravity, the decoherence of a quantum surface was studied in Ref.~\cite{Feller:2016zuk}, postulating the surrounding spin network and matter DOFs as a bilinearly coupled harmonic environment. In a classical metric setting, the Caldeira--Leggett model has been applied to capture the dynamics of the Schwarzschild spacetime as an open system~\cite{Calzada:2025dc}, again postulating a harmonic reservoir as an effective description of unresolved DOFs.

Within the broader system-plus-reservoir approach, stochastic gravity \cite{HuSinha1995,Hu2004Stochastic,Hu2020Semiclassical} and more general open effective field theories (EFTs) \cite{SalcedoColasPajer2024,SalcedoColasDufnerPajer2026} give a classical description of linearized metric perturbations, where the environment (quantum matter DOFs, or of unspecified nature in EFTs) translates into stochastic behavior via the influence functional. In quantum cosmology, open-system treatments of the mini- and midisuperspace Wheeler--DeWitt equation describe environment-induced decoherence \cite{Kiefer1987,Padmanabhan1989,HuPazSinha1993,Calzetta2012Chaos}, and master equations for the reduced quantum gravitational dynamics have been derived in the linearized regime \cite{HollowoodMcDonald2017,Hollowood2018,ShanderaAgarwalKamal2018,TakedaTanaka2025}. A related reduced-state construction appears in Ref.~\cite{HeMitraZurek2026}.


This Letter develops the system-plus-reservoir approach for homogeneous and isotropic linearized gravity, establishing its equivalence to the Caldeira--Leggett model, where the harmonic reservoir follows from an arbitrary environment expanded around a homogeneous equilibrium, with no metric time derivatives in its second order action expansion, and the equations of linearized gravity reduce to a generalized Langevin equation. This equivalence provides a connection between linearized gravity and the broader open-systems literature, which we use to analyze the resulting dynamics and to obtain a natural framework for stochasticity and reduced quantization.

\prlsection{Symmetric perturbations of Minkowski spacetime}%
We consider spatially homogeneous and isotropic linear perturbations of the Minkowski metric, analyzing here the vacuum scenario. In the ADM Hamiltonian formalism the associated expansions for a one-parameter family of dynamical variables are
\begin{equation}\label{eq:adm-perturbations}
\begin{aligned}
\gamma_{ij}
&=\delta_{ij}+\varepsilon\varphi\,\delta_{ij}
+\Oh(\varepsilon^2),\\
\pi^{ij}_g
&=\frac{\varepsilon}{3}\,\pi_\varphi\delta^{ij}
+\Oh(\varepsilon^2),\\
N&=1+\varepsilon\alpha+\Oh(\varepsilon^2),
&N^i&=\Oh(\varepsilon^2),
\end{aligned}
\end{equation}
respectively representing the induced spatial metric, conjugate momentum density, lapse function and shift vector.  

For these perturbations the gravitational action, with Gibbons–Hawking–York boundary term, expands as 
\begin{equation}
S_g[\gamma_{ij},\pi^{ij}_g,N,N^i] =\varepsilon^2S_g^{(2)}[\varphi,\pi_\varphi]+\Oh(\varepsilon^3),
\end{equation}
with 
\begin{equation}\label{eq:particle-action-hamil}
S_g^{(2)}[\varphi,\pi_\varphi]
=V_0\int
\left(\pi_\varphi\dot\varphi + \frac{\kappa}{3}\,\pi_\varphi^2\right)\dd t,
\end{equation}
where $\kappa=8\pi$ $(G=c=1)$, and $V_0$ is the volume of the fiducial cell, introduced to regularize the spatial integral~\cite{Ashtekar:2011ni}. Note that $S_g^{(2)}$ formally describes the dynamics of a free, nonrelativistic particle of mass $M=-3 / (2\kappa)$.

\prlsection{The Caldeira--Leggett model}%
This model \cite{Ullersma:1966, Zwanzig:1973, Caldeira:1981rx, Caldeira:1982uj} captures the influence of unresolved DOFs on a particle with Hamiltonian $H_S = p^2/2M + V(q)$ under two standard assumptions: the environment stays close to equilibrium, so an (infinite) set of harmonic oscillators represents it, and the coupling is weak enough to linearize the interaction. The global Hamiltonian reads
\begin{equation}\label{eq:Caldeira}
    H =H_S+\frac{1}{2}\sum_{j}\left(p_j^2+\omega_j^2\left(x_j-\frac{\alpha_j}{\omega_j^2} q\right)^2\right),
\end{equation}
with $\omega_j$, $p_j$, and $x_j$ the frequency, mass-weighted momentum, and mass-weighted position of the $j$-th oscillator, and $\alpha_j$ the coupling strength at frequency $\omega_j$. Integrating out the reservoir DOFs through a standard procedure \cite{Weiss2012} yields, for $t>0$, the generalized Langevin equation (GLE)
\begin{equation}\label{eq:cl-langevin}
M \ddot{q}+\int_0^t \gamma(t-s) \dot{q}(s) \, \dd s +\frac{\partial V}{\partial q}=\xi(t),
\end{equation}
which describes the dynamics of a particle subject to deterministic friction with memory and a stochastic force. The spectral density
\begin{equation}\label{eq:spectral-density}
    J(\omega) = \frac{\pi}{2} \sum_j \frac{\alpha_j^2}{\omega_j} \delta(\omega - \omega_j)
\end{equation}
determines the damping kernel,
\begin{equation}\label{eq:gamma-J-rel}
\gamma(t) = \frac{2}{\pi} \int_0^{\infty} \frac{J(\omega)}{\omega} \cos(\omega t) \, \dd \omega,
\end{equation}
while the stochastic force takes the form
\begin{multline}\label{xi}
\xi(t)=\sum_j \alpha_j \Big[ x_j(0) \cos \left(\omega_j t\right)\\ 
+\frac{p_j(0)}{\omega_j} \sin \left(\omega_j t\right)\Big] - \gamma(t) q(0). 
\end{multline}

Treating the interacting oscillators as a heat reservoir at temperature $T$, canonically distributed with respect to the reservoir-plus-interaction Hamiltonian, makes $\xi$ a Gaussian process with $\langle \xi(t)\rangle = 0$ and covariance $\langle \xi(t)\xi(s)\rangle = k_B T\, \gamma(t-s)$~\cite{Weiss2012}, so the noise vanishes identically at zero temperature. For an at most quadratic potential, averaging \eq{eq:cl-langevin} shows that the average $\langle q(t)\rangle$ obeys the deterministic form of Eq.~\eqref{eq:cl-langevin} with vanishing right-hand side, so the environmental influence on the average evolution is entirely captured by the spectral density. 

Genuinely irreversible dynamics require promoting the reservoir modes to a continuum, making $J(\omega)$ a smooth function of frequency~\cite{Weiss2012}. A standard choice is a low-frequency power law, $J(\omega) \propto \omega^p$ with $p>0$, supplemented by a high-frequency cutoff at $\omega_c$: sharp, $J(\omega) = \omega^p \, \mathbf 1_{(0,\omega_c)}(\omega)$, or smooth, as in the algebraic form $J(\omega) = \omega^p \, \nicefrac{\omega_c^2}{\omega^2 + \omega_c^2}$.

\prlsection{Minisuperspace-plus-reservoir model}%
The quadratic action of \eq{eq:particle-action-hamil}  formally describes a free particle. This suggests modeling the perturbations of Minkowski spacetime as an open system following a Caldeira--Leggett formalism, where modified dynamics arise from a harmonic reservoir bilinearly coupled to the geometry through the scale factor. The main result of this Letter is that this structure need not be imposed as an ansatz: the homogeneous and isotropic sector of linearized gravity coupled to arbitrary DOFs  takes precisely the Caldeira--Leggett form under two assumptions.  First, the Hamiltonian of these additional DOFs (hereafter the environment) has a nondegenerate local minimum, homogeneous and isotropic\footnote{Note that a homogeneous nondegenerate minimum must be isotropic for a Minkowski background.
}, that, together with Minkowski spacetime, solves the unperturbed theory. Second, the environmental action expanded to second order around this background contains no time derivatives of the metric\footnote{Absence of affine-connection dependence in the full action is sufficient, as for scalar and $n$-form fields, whose derivatives enter the action only through the exterior derivative~\cite{Kuchar:1976dtf}, but is not necessary. For a non-minimally coupled scalar field $\chi$ expanded around $\bar\chi=0$, an $R\chi^2$ coupling appears at cubic order.}.

\medskip 
We consider that the theory admits an ADM canonical formulation\footnote{Up to boundary terms, a local tensor-field theory admits this Hamiltonian form~\cite{Kuchar:1976dtf}.}, so the global action, up to a boundary term, is
\begin{multline}\label{eq:full-canonical-action}
S_{\rm global}=\int\dd t\int\dd^3x\,\Bigl[
\pi^{ij}\dot\gamma_{ij}+p_I\dot q^I\\
{}-N\mathcal H_{\rm total}-N^i\mathcal H_{i,\rm total}
\Bigr],
\end{multline}
where $q^I$ ($I=1,\dots,n$) denote the environmental variables, $p_I$ their conjugate momentum densities, and  $\pi^{ij}$ the momentum density conjugate to the spatial metric\footnote{Generally different from the vacuum momentum $\pi_g^{ij}$.}. 

\medskip
The Hamiltonian and momentum constraint densities are decomposed in gravitational and environmental terms as $\mathcal{H}_{\rm total} = \mathcal{H}_g + \mathcal{H}_{\rm env}$,  $\mathcal{H}_{i, \rm total} = \mathcal{H}_{i, g} + \mathcal{H}_{i, \rm env}$, respectively. Then the environmental action, up to a boundary term, is 
\begin{equation}
        S_{\rm env}
    =
     \int \dd t\int \dd^3x\,\Bigl[
        p_I\dot q^I
        - N\mathcal H_{\rm env}
        - N^i \mathcal{H}_{i, \mathrm{env}}
      \Bigr].
\end{equation}

We expand the environmental variables around a spatially homogeneous and isotropic configuration and restrict their linear perturbations to the homogeneous and isotropic sector
\begin{equation}\label{eq:expansion}
    z^A\equiv(q^I,p_I) = \bar z^A(t) + \varepsilon Z^A + \mathcal{O}(\varepsilon^2),
\end{equation}
where $\bar z^A \equiv (\bar q^I,\bar p_I)$ and $Z^A \equiv (Q^I,P_I)$, with $A=1,\dots,2n$.

\medskip
Since $\bar z ^A$ is a nondegenerate local minimum of the environmental Hamiltonian, we have\footnote{Fixing Minkowski metric background, $$\frac{\delta H_{\rm env}}{\delta z^A} |_{\bar z^A} = \int d^3x\, \left. \frac{\partial\mathcal H_{\rm env}}{\partial z^A}\right|_{\bar z^A} \delta z^A=0, \ \, \text{ so} \left. \frac{\partial\mathcal H_{\rm env}}{\partial z^A}\right|_{\bar z^A}=0.$$}
\begin{equation}\label{eq:min}
    \left.\frac{\partial \mathcal{H}_{\text{env}}}{\partial z^A}\right|_{\bar z^A} = 0, \quad
    \left.\frac{\partial^2 \mathcal{H}_{\text{env}}}{\partial z^B \partial z^C}\right|_{\bar z^A} Z^B Z^C > 0, \, Z \neq 0.
\end{equation}

For these perturbations the environmental action expands as is $S_{\rm env}=S_{\rm env}^{(0)}+\varepsilon^2 S_{\rm env}^{(2)}+\Oh(\varepsilon^3)$, with (see the Supplemental Material)
\begin{equation}\label{eq:action-2}
    S^{(2)}_{\rm env}
    = V_0\int \Bigl(
        P_I \dot Q^I - \mathfrak h^{(2)}_{\rm env}[\varphi,Q^I,P_I]
        \Bigr) \dd t,
\end{equation}
where
\begin{equation}\label{eq:env-ham-2}
    \mathfrak h_{\rm env}^{(2)} = \frac12\, \sigma_0\,\varphi^2 + \ell_A Z^A\varphi + \frac12\, \left.\frac{\partial^2 \mathcal{H}_{\text{env}}}{\partial z^B \partial z^C}\right|_{\bar z^A} Z^B Z^C .
\end{equation}

A canonical transformation brings it to the form
\begin{equation}\label{eq:env-normal}
    \mathfrak h_{\rm env}^{(2)}
    =
    \frac12
    \sum_j
    \left[
        \pi_j^2
        +
        \omega_j^2
        \left(x_j-\frac{\alpha_j}{\omega_j^2}\,\varphi\right)^2
    \right]
    +
    V(\varphi),
\end{equation}
where the effective potential is $V(\varphi)\equiv\tfrac{\sigma}{2} \,\varphi^2$, with
\begin{equation}\label{eq:meff}
    \sigma
    \equiv
    \sigma_0-\ell_A(\mathcal K^{-1})^{AB}\ell_B
    =
    \sigma_0 -\sum_j\frac{\alpha_j^2}{\omega_j^2}.
\end{equation}

The linearized dynamics of the global system are those of a Caldeira--Leggett model for a particle of negative mass in the potential $V(\varphi)$. Indeed, using Eqs.~\eqref{eq:particle-action-hamil}, \eqref{eq:action-2} and \eqref{eq:env-normal}, the quadratic global action reads\footnote{The second order environmental action is independent of metric time derivatives, so $\pi^{ij} =\pi^{ij}_g + \mathcal{O}(\varepsilon^2)$.}
\begin{equation}\label{global-action}
    S^{(2)}_{\rm global} = V_0 \int\bigg(\pi_\varphi\dot\varphi + P_I \dot Q^I - H_{\rm CL} \bigg) \dd t,
\end{equation} 
with 
\begin{multline}\label{H_CL}
    H_{\rm CL} = -\frac{\kappa}{3}\,\pi_\varphi^2 + V(\varphi)\\+ \frac12
    \sum_j
    \left(
        \pi_j^2
        +
        \omega_j^2
        \left(x_j-\frac{\alpha_j}{\omega_j^2}\,\varphi\right)^2
    \right).
\end{multline}

Following the standard Caldeira-Leggett procedure leads to reduced dynamics for the metric perturbations governed by the GLE of \eq{eq:cl-langevin} with mass $M=-3/(2\kappa)$,
\begin{equation}\label{eq:varphi-langevin}
    -\frac{3}{2\kappa}\,\ddot{\varphi} + \int_0^\eta \gamma(\eta-s)\,\dot{\varphi}(s)\,\dd s + \sigma \varphi = \xi(\eta), \quad \eta>0.
\end{equation}

 As noted below \eq{xi}, the mean perturbation $\langle\varphi\rangle$ satisfies \eq{eq:varphi-langevin} with a vanishing right-hand side. The presence of the fiducial cell volume in the action of Eq.\eqref{global-action}, although it has no effect on deterministic dynamics, introduces an overall factor of $V_0$ in the reservoir-plus-interaction Hamiltonian, which suppresses fluctuations as $V_0^{-1}$,
\begin{equation}\label{eq:xi-suppressed}
    \langle\xi(\eta)\xi(s)\rangle
    =\frac{k_B T}{V_0}\,\gamma(\eta-s).
\end{equation}


\prlsection{Linearized dynamics}%
The evolution of $\varphi(\eta)$ is determined by the GLE of \eq{eq:varphi-langevin}. We analyze here the mean metric perturbation $\langle \varphi \rangle$, whose dynamics are formally those of a unit-mass particle subject to an \emph{anti-damping} force in the quadratic potential $-\kappa\sigma\varphi^2/3$. Using Laplace analysis (see the End Matter) this evolution can be classified according to the sign of $\sigma$. 

For $\sigma>0$, the effective potential is repulsive and generic evolution diverges exponentially, so the first-order perturbative approach breaks down. When $\sigma=0$, the only bounded evolutions are constant and describe Minkowski spacetime up to an overall rescaling of the scale factor (see the End Matter).

For $\sigma<0$, the effective potential is harmonic, with natural frequency $\Omega_0=\sqrt{-2\kappa\sigma/3}$, and the GLE admits nontrivial bounded evolution in the weak-coupling regime of spectral densities with small amplitude. In that case, a necessary condition for bounded evolution is that the spectral density be suppressed around the natural frequency. More precisely, spectral densities bounded between two positive constants near $\Omega_0$ resonantly drive the anti-damping and produce unbounded evolution at arbitrarily weak coupling (see Theorem~\ref{theorem} in the End Matter).

As an illustration of bounded $\varphi$ evolution, for which these conditions are also sufficient, consider the power-law spectral densities with sharp cutoff at $\omega_c<\Omega_0$ \cite{Weiss2012},
\begin{equation}\label{eq:J-sharp}
    J_\zeta(\omega)=\zeta\,\omega^{p}\,\mathbf 1_{(0,\omega_c)}(\omega), \quad \zeta, p>0.
\end{equation}

In the weak-coupling regime, $\zeta\ll1$ (explicitly, $\zeta<\zeta_{\mathrm b}$, as defined in Theorem~\ref{thm:bounded} in the End Matter), the metric perturbation evolves as a bounded oscillation plus a transient term,
\begin{multline}\label{eq:final-thm}
    \varphi_\zeta(\eta)=
   \sum_{k\in\{1,2\}}\bigl[A_k\cos(\nu_k\eta)+B_k\sin(\nu_k\eta)\bigr]\\
    +\varphi_{\zeta,\mathrm{tr}}(\eta),
\end{multline}
with real constants $A_k,B_k$, frequencies $\omega_c<\nu_1<\nu_2<\Omega_0$, and $\varphi_{\zeta,\mathrm{tr}}(\eta)\to0$ as $\eta\to\infty$ (see Theorem~\ref{thm:bounded} in the End Matter).

Dynamics can be determined\footnote{By dimensional analysis, $\varphi(\eta)$ with parameters $(\Omega_0, \omega_c, \zeta)$ and initial values $(\varphi_0, \dot\varphi_0)$ has the same dynamics as $\varphi(\eta/\upsilon)$ with rescaled parameters $(\upsilon \, \Omega_0, \upsilon \, \omega_c, \upsilon^{2-p} \, \zeta)$, and $(\varphi_0, \upsilon \, \dot\varphi_0)$. Choosing $\upsilon=\Omega_0^{-1}$ and using $\zeta_{\mathrm b}=3\pi p(\Omega_0-\omega_c)^2/(4\kappa\omega_c^p)$ gives $\Omega_0^{p-2}\zeta=(\zeta/\zeta_{\mathrm b})\,\nicefrac{3\pi p(1-\omega_c/\Omega_0)^2}{4\kappa(\omega_c/\Omega_0)^p}$.} from the parameters $(\omega_c/\Omega_0, \zeta/\zeta_{\mathrm b}, p)$. Fig.~\ref{fig:bounded-ev} displays the transient $\varphi_{\zeta,\mathrm{tr}}$ across the sub-Ohmic ($p<1$), Ohmic ($p=1$), and super-Ohmic ($p>1$) regimes, together with its asymptotic decay laws. For $p<1$ and $\varphi_0\neq0$ the decay is dominated by the contribution of frequencies near $\omega=0$ and follows $\eta^{-p}$. For $p\geq1$, or when $\varphi_0 = 0$, it is dominated by the contribution of frequencies near the cutoff, which oscillates at frequency $\omega_c$ with envelope $\{\eta\,[\ln^{2}(\eta/\eta_*)+\pi^{2}]\}^{-1}$, where $\eta_*$ is a timescale fixed by the spectral density parameters. These decay laws are established in Theorem~\ref{thm:X} in the End Matter, where $\eta_*$ is defined.

\begin{figure}[t]
     \centering
    \includegraphics[width=\columnwidth]{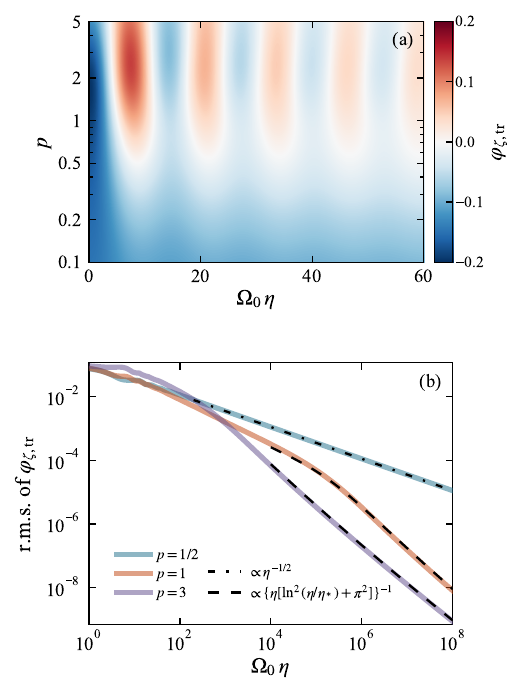} 
    \caption{Transient contribution to the evolution of $\varphi$ for the sharp-cutoff power-law spectral density. (a) \(\varphi_{\zeta,\mathrm{tr}}(\eta)\) for \(0.1\le p\le 5\). (b) root mean square of \(\varphi_{\zeta,\mathrm{tr}}\), computed over the period of the asymptotic behavior \(2\pi/\omega_c\), for sub-Ohmic \((p=1/2)\), Ohmic \((p=1)\), and super-Ohmic \((p=3)\); dashed curves show the analytic asymptotic behavior (see the End Matter). We used reservoirs with parameters $\omega_c/\Omega_0=0.5$ and $\zeta/\zeta_{\mathrm b} = 0.5$, and initial conditions $\varphi_0=1$, $\dot{\varphi}_0=0$.}
\label{fig:bounded-ev}
\end{figure}

\prlsection{Discussion and outlook}%
This Letter establishes the equivalence between the homogeneous and isotropic sector of linearized gravity coupled to arbitrary DOFs and the Caldeira--Leggett model provided the DOFs are not coupled to metric time derivatives at quadratic order. The harmonic reservoir encoding the environment arises as the second-order expansion, around a homogeneous stable equilibrium, of an arbitrary collection of DOFs. In line with the open-systems approach, this representation is agnostic with respect to the microscopic nature of the environment and depends only on its quadratic Hamiltonian \eq{eq:env-ham-2}. This encompasses  local matter field theories (such as collections of scalar fields or $n$-forms fields \cite{Kuchar:1976dtf}) and also effective DOFs that may capture physics beyond general relativity, including quantum-gravitational corrections. After tracing out the environmental DOFs, the equations of linearized gravity reduce to the GLE of \eq{eq:varphi-langevin}, where the reservoir influence is condensed in the spectral density.

The Caldeira--Leggett representation of the gravitational dynamics also yields a reduced quantization of homogeneous and isotropic linearized gravity. Comprehensive treatments of this quantization scheme can be found in Refs.~\cite{Weiss2012,Breuer:2007juk}. The canonical quantization of the quadratic Hamiltonian in \eq{eq:action-2} proceeds by promoting the perturbation and the reservoir modes to operators satisfying $[\widehat\varphi,V_0 \widehat\pi_\varphi]=i\hbar$ and $[\widehat x_j, V_0 \widehat\pi_k]=i\hbar\,\delta_{jk}$, with the global state evolving unitarily under $V_0 H_{\rm CL}$. The Heisenberg equations for the reservoir operators can be solved, and the reduced dynamics of the geometry is governed by a quantum GLE, the quantum counterpart of \eq{eq:varphi-langevin}
\begin{equation}\label{eq:quantum-langevin}
-\frac{3}{2\kappa}\,\ddot{\widehat\varphi}
+\int_0^\eta\gamma(\eta-s)\dot{\widehat\varphi}(s)\,\dd s
+\sigma\widehat\varphi
=\widehat\xi(\eta),
\end{equation}
where the operator-valued noise $\widehat\xi(\eta)$ is built from the initial reservoir operators as in \eq{xi}. 
For a reservoir in thermal equilibrium, fluctuations persist at zero temperature, and are suppressed as $V_0^{-1}$,
\begin{equation}\label{eq:quantum-noise}
\begin{aligned}
\left\langle\widehat\xi(\eta)\widehat\xi(s)\right\rangle
&=\frac{\hbar}{\pi V_0}\int_0^\infty J(\omega)
\left\{\coth\!\left(\frac{\hbar\omega}{2k_BT}\right)
\cos[\omega(\eta-s)]\right.\\
&\qquad\qquad\left.{}-i\sin[\omega(\eta-s)]\right\}\,\dd\omega.
\end{aligned}
\end{equation}

This analysis of homogeneous linear perturbations requires special consideration of two issues: linearization stability and the regularization of spatial integrals over $\mathbb{R}^3$. Regarding the first, in the End Matter we outline a proof that, under sensible regularity conditions, solutions of the linearized dynamics are locally identified with linearizations of solutions of the unperturbed theory. Regarding the second, following a standard approach \cite{Ashtekar:2011ni} we have introduced a fiducial cell and restricted spatial integrals to it. This acts as an infrared regulator which must be removed in order to obtain physical results by taking the limit $V_0 \to \infty$ at the end. In particular, by Eqs.~\eqref{eq:xi-suppressed} and \eqref{eq:quantum-noise}, the stochastic force vanishes (at classical or quantum level). A standard alternative is working with the spatial topology of the 3-torus $\mathbb{T}^3$, for which spatial integrals are well-defined, with $V_0$ now identified with the physical volume of $\mathbb{T}^3$~\cite{Ashtekar:2011ni}. However, this approach introduces a Taub charge constraint~\cite{BrillDeser1973,Higuchi1991Toroidal,AltasTekin2019Taub}, which in our setting takes the form $H_{CL} = 0$. This restriction on the initial data is incompatible with the canonical ensemble distribution associated to the reservoir, and prevents taking the continuum limit.

\medskip
The analysis of the GLE dynamics also benefits from this contact with the open-systems literature. Equation \eqref{eq:varphi-langevin} is a GLE for a particle of negative mass, equivalently a particle subject to an anti-damping force. Its Markovian limit, $\gamma(\eta)\propto\delta(\eta)$, appears in the dynamics of dark solitons \cite{Aycock2017,Hurst2017}. For $\sigma<0$, a negative-energy oscillator immersed in a harmonic continuum has been analyzed within Hamiltonian bifurcation theory \cite{HagstromMorrison2013}, where generic coupling is shown to destabilize the oscillator, in agreement with Theorem~\ref{theorem} in the End Matter. The results presented here extend that analysis to arbitrary memory kernels and give necessary and sufficient conditions for bounded evolution.

Additional bounded evolutions can be identified through the inverse problem of reconstructing the spectral density that generates a prescribed evolution. Reconstruction procedures tailored to noisy or discretized data have been developed across molecular and stochastic modeling~\cite{lange2006collective,daldrop2018butane,satija2019generalized,gottwald2015parametrizing,lei2016data,kerrwinter2023deep,tepper2024accurate}. In the present setting the linearity of \eq{eq:varphi-langevin} suggests a formal answer. The mean evolution equation in Laplace space, \eq{eq:varphi(z)}, solved for the damping kernel yields
\begin{equation}\label{eq:gamma-inverse}
    \widetilde{\gamma}(s) = \frac{{\frac{3}{2\kappa}}(s\varphi_{0}+\dot{\varphi}_{0}) - \left({\frac{3}{2\kappa}}s^{2}-\sigma\right)\widetilde{\varphi}(s)}{\varphi_{0} - s\widetilde{\varphi}(s)}.
\end{equation}
Whenever this expression is well defined, the mean evolution determines the damping kernel, and inverting the relation of \eq{eq:gamma-J-rel} yields the associated spectral density, provided the result is a regular non-negative function.

While the analysis illustrated concerns the mean metric perturbation, for which the noise term of \eq{eq:varphi-langevin} vanishes, GLE dynamics also incorporates stochastic effects encoded in the Gaussian process $\xi(\eta)$ fixed by the reservoir.

The equivalence between the homogeneous and isotropic sector of linearized gravity with arbitrary DOFs considered here and the Caldeira--Leggett model motivates future work toward a broader \textit{gravity-plus-reservoir} framework. In this framework, the influence of unresolved microscopic DOFs on the geometry would be encoded in an effective reservoir, providing a distinct connection of gravitational dynamics with open-system methods developed across condensed-matter physics, quantum optics, and chemical physics.

\medskip
\begin{acknowledgments}
PVC acknowledges support from Generalitat Valenciana Grants No. CIACIF/2021/268 and No. CIBEFP/2026/145. PB acknowledges financial support from the Generalitat Valenciana through PROMETEO PROJECT CIPROM/2022/13 and Ministerio de Ciencia, Innovación y Universidades and Agencia Estatal de Investigación (AEI) under Project PID2025-171322NB-C21 (MICIU/AEI/10.13039/501100011033).
\end{acknowledgments}

\bibliography{biblio}

\onecolumngrid
\begin{center}
    \textbf{\large End Matter}
\end{center}
\twocolumngrid

\appendix



\prlendmattersection{Linearization stability}%
We show that, under general regularity conditions stated below, solutions of the linearized theory locally identify with linearizations of solutions of the unperturbed theory. The outlined proof adapts the local linearization stability result for the vacuum Einstein equations~\cite{BrillReulaSchmidt1987} to the linearized arbitrary DOFs considered in the Letter (see Refs.~\cite{SaraykarJoshi1981,SaraykarJoshi1982Erratum,Arms1977,Arms1979} for specific matter fields scenarios).  For convenience, we introduce the constraint map $\mathcal{C} (\Psi) \equiv (\mathcal{H}_{\rm total}(\Psi),\mathcal{H}_{\rm total, i}(\Psi))$ with  $\Psi\equiv(\gamma_{ij},\pi^{ij},q^I,p_I)$.

\medskip
Regarding the constraints equations, let $[\bar \Psi + \varepsilon \Psi^{[1]}](t_0)$ denote a linear perturbation evaluated at some initial time $t_0$, with $\mathcal{C}(\bar \Psi) = (0, 0)$ and 
\begin{equation}
    \left.
    \frac{\partial\mathcal{C}}{\partial\Psi}
    \right|_{\bar\Psi}
    \bigl[ \Psi^{[1]} \bigr] = (0, 0),
\end{equation}
that is, solving linear constraints.  We show that there exists a one-parameter family $\Psi_\varepsilon$ such that $\mathcal{C}(\Psi_\varepsilon) = (0, 0)$ with $\Psi_0 = \bar \Psi|_{t=t_0}$ and $\Psi^{(1)} = \Psi^{[1]}|_{t=t_0}$. 

The proof follows from noting that at Minkowski background the constraint map differential,
\begin{multline}\label{C-dif}
    \left.
    \frac{\partial\mathcal{C}}{\partial\Psi}
    \right|_{\bar\Psi}
    \bigl[\delta \Psi \bigr] =\Big(
    -\frac{1}{2\kappa}
    \left(\partial_i\partial_j\delta\gamma^{ij}
    -\Delta\delta\gamma \right),\\ \,
    -2\partial_j\delta\pi_i{}^j
    +\left. \frac{\partial \mathcal{H}_{i, \,\mathrm{env}}}{\partial \Psi}\right|_{\Psi_0}[\delta\gamma_{ij},\delta q^I,\delta p_I]
    \Big),
\end{multline}
is surjective. Then a standard submersion theorem\footnote{Rigorously, this theorem works with Hilbert manifolds. This is solved considering Sobolev completions of the smooth fields involved on a bounded region (see Ref.~\cite{BrillReulaSchmidt1987}). This consideration is behind the local character of the result.} (Props. 2.3 and 2.4 in Ref.~\cite{Lang1995})  implies that $\mathcal{C}
^{-1}(0,0)$ is a smooth manifold with tangent space $\{\delta \Psi : \left.
    \frac{\partial\mathcal{C}}{\partial\Psi} \right|_{\bar\Psi} \bigl[\delta \Psi \bigr] = 0\}$. In particular $ \Psi^{[1]}|_{t=t_0}$ belongs to the tangent space, so the mentioned family $\Psi_\varepsilon$ exists. 

To prove surjectivity in Eq.\eqref{C-dif} it suffices to verify that, given an arbitrary pair $(\rho, j_i)$, the constraint differential maps $(\delta\gamma_{ij},\delta \pi^{ij}, 0, 0) \mapsto (\rho, j_i)$ for
\begin{equation}
    \begin{aligned}
        \delta \gamma_{ij} &\equiv f \delta_{ij}, \, \text{ with } \, \Delta f=\kappa\rho,\\
        \delta \pi^{ij}  &\equiv \partial^iW^j + \partial^jW^i - \delta^{ij}\partial_kW^k,
    \end{aligned} 
\end{equation}
 with 
\begin{equation}
   \Delta W_i = \frac12 \left( \left. \frac{\partial \mathcal{H}_{i, \,\mathrm{env}}}{\partial \Psi} \right|_{\Psi_0}[\delta\gamma_{ij},0,0] - j_i \right).
\end{equation}

Regarding the evolution equations, we assume that, after a suitable gauge fixing of the geometry (compatible with the linearization considered\footnote{The harmonic gauge is a standard example.} in Eq.~\eqref{eq:adm-perturbations}) the global system defines a smoothly well-posed Cauchy problem\footnote{That is, the unique local existence of solutions for given initial data on a Cauchy hypersurface, and smooth dependence on the initial values. This consideration is behind the local character of the result too.}. Then, each $\Psi\varepsilon$ initial data determines a unique solution $\Psi_\varepsilon (t)$. Smooth dependence on the initial data ensures that
\begin{equation}
    \Psi^{(1)}_\varepsilon(t) \equiv \left.\frac{\partial \Psi_\varepsilon}{\partial \varepsilon}\right|_{\varepsilon = 0} (t) 
\end{equation}
is well defined. Differentiating the evolution equations with respect $\varepsilon$ at $\varepsilon = 0$ shows that $\Psi^{(1)}_\varepsilon(t)$ solves the linearized equations, and by uniqueness of the linearized Cauchy problem, $\Psi^{(1)}_\varepsilon(t) = \Psi^{[1]}(t)$.

\prlendmattersection{Laplace analysis of the mean evolution}\label{app:laplace}%
We analyze here the dynamics determined by \eq{eq:varphi-langevin} using Laplace methods. For a suitable real function $f(\eta)$, we denote its Laplace transform by
\begin{equation}
    \widetilde f(s) = \int_0^\infty f(\eta)\, e^{-s \eta}\, \dd \eta.
\end{equation}

In Laplace space, the evolution equation for the metric perturbation, \eq{eq:varphi-langevin}, reads
\begin{equation}\label{eq:varphi(z)}
    \widetilde{\varphi} (s) =\frac{{\frac{3}{2\kappa}}(s\varphi_0+ \dot \varphi_0) -\widetilde{\gamma}(s)\varphi_0 - \widetilde{\xi} (s)}{{\frac{3}{2\kappa}}s^2-s\gtilde(s)-\sigma},
\end{equation}
where $\widetilde{\xi}$ is the pathwise Laplace transform of the Gaussian noise process $\xi$. We restrict the analysis in this End Matter section to the mean evolution, for which $\langle\widetilde{\xi}\rangle = 0$. The mean perturbation $\langle\widetilde{\varphi}(s)\rangle$ is then fixed by the initial conditions, $\sigma$, and the damping kernel (or through \eq{eq:gamma-J-rel}, by the spectral density). From here on, abusing notation, we write $\widetilde{\varphi}(s)$ for the transform of the mean perturbation.

The damping-kernel transform follows from \eq{eq:gamma-J-rel},
\begin{equation}\label{eq:dem-tilde-gamma}
    \widetilde\gamma(s) = \frac{2}{\pi} \int_0^\infty \frac{J(\omega)}{\omega}\frac{s}{s^2+\omega^2}\,\dd\omega, \qquad \Re s>0,
\end{equation}
holomorphic in the right half-plane.

\medskip
Since $\gtilde$ is holomorphic in $\Re s > 0$ (see \eq{eq:dem-tilde-gamma}), \eq{eq:varphi(z)} shows that $\widetilde{\varphi}(s)$ is meromorphic there. Any pole $p \in \mathbb{C}$ with $\Re p > 0$ then contributes an exponentially divergent term $\Re(e^{p\eta})$ to the evolution of $\varphi(\eta)$. These poles are the zeros of the denominator of \eq{eq:varphi(z)},
\begin{equation}\label{eq:D(s)}
    D(s) \equiv {\frac{3}{2\kappa}}s^2-s\gtilde(s)-\sigma,
\end{equation}
that are not cancelled by a coincident zero of the numerator. Such cancellation can occur only for spectral densities whose $D$ has a real simple zero $\lambda > 0$, and provided the initial conditions satisfy
\begin{equation}\label{eq:in-cond-cancellation}
    \dot\varphi_0 = {-\frac{2\kappa\sigma}{3\lambda}}\,\varphi_0.
\end{equation}

We split the analysis of the poles of $\widetilde{\varphi}$ into three cases according to the sign of $\sigma$.

\prlendmatterheading{$\sigma > 0$}%
In this case, note that
\begin{align}
   \lim_{x \downarrow 0,\, x\in\mathbb{R}} D(x) &= -\sigma < 0, \\
   \lim_{x \to \infty,\, x\in\mathbb{R}} D(x) &= +\infty. \label{eq:asymp-D}
\end{align}
Since $D(x)$ is continuous, there exists $\lambda > 0$ with $D(\lambda) = 0$. Unless the initial data satisfy the cancellation condition of \eq{eq:in-cond-cancellation}, this zero is a pole of $\widetilde\varphi$ and produces an exponentially divergent contribution. Thus the evolution is generically unbounded.

\prlendmatterheading{$\sigma = 0$}%
With $\sigma = 0$, \eq{eq:varphi(z)} reduces to
\begin{equation}\label{eq:varphi-sigma0}
    \widetilde\varphi(s) = \frac{\varphi_0}{s} + \frac{{\frac{3}{2\kappa}}\dot\varphi_0}{s^2\left({\frac{3}{2\kappa}} - \gtilde(s)/s\right)}.
\end{equation}
If $\dot\varphi_0=0$, this gives the constant solution $\varphi(\eta)=\varphi_0$, corresponding to Minkowski up to a rescaling of the scale factor. For $\dot\varphi_0\neq0$, the evolution is unbounded, as we show now.

For $x > 0$, define
\begin{equation}
    K(x) \equiv \frac{\gtilde(x)}{x} = \frac{2}{\pi} \int_0^\infty \frac{J(\omega)/\omega}{x^2 + \omega^2}\, \dd\omega,
\end{equation}
the second equality following from \eq{eq:gamma-J-rel} (see also \eq{eq:dem-tilde-gamma}). Since $J(\omega) \geq 0$, we have $K'(x) \leq 0$, so $K$ is non-increasing and the limit
\begin{equation}
    K_0 \equiv \lim_{x\downarrow 0} K(x) \in [0,\infty]
\end{equation}
is well defined. We distinguish two cases.

If $K_0 \leq {3/(2\kappa)}$, then
\begin{equation}
    \lim_{x \downarrow 0} x\, \widetilde\varphi(x) = \varphi_0 + \lim_{x\downarrow 0} \frac{{\frac{3}{2\kappa}}\dot\varphi_0}{x\left({\frac{3}{2\kappa}} - K(x)\right)},
\end{equation}
whose second term diverges. Hence $\widetilde\varphi(x) \neq \mathcal{O}(1/x)$ as $x \downarrow 0$, and $\varphi$ is unbounded.

If instead $K_0 > {3/(2\kappa)}$, then since $K(x) \to 0$ as $x \to \infty$ and $K$ is continuous, there exists $\lambda > 0$ with $K(\lambda) = {3/(2\kappa)}$. This is a real pole of $\widetilde\varphi$ in the right half-plane, again producing an exponentially divergent contribution.

\prlendmatterheading{$\sigma < 0$}%
A positive real zero of $D$ exists whenever
\begin{equation}\label{eq:sigma-neg-cond}
    \inf_{x > 0, \,  x\in \mathbb{R}} \left[{\frac{3}{2\kappa}}x^2 - x\, \gtilde(x)\right] < \sigma,
\end{equation}
which, by \eq{eq:gamma-J-rel}, is fulfilled by a spectral density of sufficiently large amplitude, that is, in the strong-coupling regime.

In the weak-coupling regime a pole in the right half-plane is also present for any spectral density bounded between two positive constants around the natural frequency of the harmonic potential, ${\Omega_0=\sqrt{-2\kappa\sigma/3}}$ (see Theorem \ref{theorem}).

\prlendmattersection{Conditions for bounded evolution}\label{app:conditions}%
Proofs of the results stated in this and the following End Matter section are collected in the Supplemental Material~\cite{SM}.

\begin{theorem}\label{theorem}
Let $\sigma<0$, $\zeta>0$ and consider a spectral density of the form $J_\zeta(\omega) = \zeta\, J(\omega)$, with $J/\omega\in L^1(0,\infty)$. Assume that $J$ is essentially bounded between two positive constants near ${\Omega_0=\sqrt{-2\kappa\sigma/3}}$. Then there exists $\zeta_{\mathrm u}>0$ such that, for every $0<\zeta<\zeta_{\mathrm u}$, the associated metric perturbation in Laplace space, $\widetilde{\varphi_\zeta}$, has a pole $s_\zeta$ with $\Re \, (s_\zeta)>0$ for all nonzero initial data $(\varphi_0,\dot \varphi_0)$.
\end{theorem}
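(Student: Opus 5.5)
The plan is to locate a zero of the denominator $D_\zeta(s)=\frac{3}{2\kappa}s^2-\zeta s\gtilde(s)-\sigma$ of \eq{eq:varphi(z)} close to $i\Omega_0$ and strictly inside the right half-plane. Here $\gtilde$ denotes the transform \eq{eq:dem-tilde-gamma} of the unscaled $J$, so the kernel for $J_\zeta$ is $\zeta\gtilde$. Then I will check that this zero is not cancelled by the numerator.

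Since $-\sigma=\frac{3}{2\kappa}\Omega_0^2$, the condition $D_\zeta(s)=0$ is equivalent to
$$s^2=-\Omega_0^2+\tfrac{2\kappa\zeta}{3}\,s\gtilde(s).$$
At $\zeta=0$ the roots are $\pm i\Omega_0$, which lie on the boundary of the half-plane. Formally, the first-order shift is $s\approx i\Omega_0+\frac{\kappa\zeta}{3}\gtilde(i\Omega_0+0)$. By the Sokhotski--Plemelj formula, $\Re\gtilde(i\Omega_0+0)=J(\Omega_0)/\Omega_0>0$, so the root moves to the right. Because $J$ is only essentially bounded near $\Omega_0$ and need not be continuous, I will not use boundary values. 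Instead I will produce the root by a fixed-point argument on a compact set that lies strictly inside $\Re s>0$.

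\emph{Step 1: uniform estimates on $\gtilde$.} Write $s=x+iy$ with $x>0$ and $y$ near $\Omega_0$. The kernel satisfies
$$\Re\frac{s}{s^2+\omega^2}=\frac{x(x^2+y^2+\omega^2)}{|s^2+\omega^2|^2}>0,$$
and for $\omega$ near $y$ it behaves like the Poisson kernel $\frac{1}{2}\,\frac{x}{x^2+(\omega-y)^2}$. Suppose $a\le J\le b$ a.e. on a window $W=(\Omega_0-\delta,\Omega_0+\delta)$. Splitting the integral into $W$ and its complement then gives constants $0<c_1\le c_2$ with $c_1\le\Re\gtilde(s)\le c_2$ for $0<x\le x_0$ and $|y-\Omega_0|\le\delta/2$. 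On the complement of $W$ the integrand is dominated by a multiple of $J/\omega\in L^1$. Splitting the same way for the imaginary part, the principal-value-type contribution from $W$ is at most $C\log(1/x)$, and the rest is $O(1)$. Hence $|\Im\gtilde(s)|\le C(1+\log(1/x))$.

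\emph{Step 2: fixed point.} Define
$$T(s)=i\Omega_0\sqrt{1-\tfrac{2\kappa\zeta}{3\Omega_0^2}\,s\gtilde(s)},$$
using the principal branch, which is valid since the argument is $1+O(\zeta\log(1/\zeta))$. Zeros of $D_\zeta$ near $i\Omega_0$ are exactly the fixed points of $T$. Take the rectangle
$$K=\{x\in[\alpha\zeta,\beta\zeta],\ |y-\Omega_0|\le B\zeta\log(1/\zeta)\}.$$
Expanding, $T(s)=i\Omega_0+\frac{\kappa\zeta}{3}\gtilde(s)+O(\zeta^2\log^2(1/\zeta))$ uniformly on $K$. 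By Step 1, $\Re T(s)\in\bigl[\tfrac{\kappa}{3}c_1\zeta-o(\zeta),\ \tfrac{\kappa}{3}c_2\zeta+o(\zeta)\bigr]$, and $|\Im T(s)-\Omega_0|\le C'\zeta\log(1/\zeta)$. Choosing $\alpha<\kappa c_1/3$, $\beta>\kappa c_2/3$ and $B>C'$, we get $T(K)\subset K$ for all $\zeta<\zeta_{\mathrm u}$. $T$ is continuous (indeed holomorphic) on $K$, so Brouwer's theorem gives a fixed point $s_\zeta\in K$. It is a zero of $D_\zeta$ with $\Re s_\zeta\ge\alpha\zeta>0$, and $\Im s_\zeta\neq0$. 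I expect the main obstacle to be the uniform logarithmic control of $\Im\gtilde$ near the imaginary axis using only $L^\infty$ bounds on $J$. The rectangle's height is tailored to absorb exactly this term.

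\emph{Step 3: no cancellation.} At a zero of $D_\zeta$ we have $\zeta\gtilde(s)=\frac{3}{2\kappa}s-\sigma/s$. Substituting this into the numerator of \eq{eq:varphi(z)} with $\widetilde\xi=0$ gives
$$\tfrac{3}{2\kappa}(s\varphi_0+\dot\varphi_0)-\zeta\gtilde(s)\varphi_0=\tfrac{3}{2\kappa}\dot\varphi_0+\frac{\sigma}{s}\varphi_0.$$
This vanishes only if $s=-\frac{2\kappa\sigma\varphi_0}{3\dot\varphi_0}\in\mathbb R$, or if $\varphi_0=\dot\varphi_0=0$. Since $s_\zeta$ is non-real, it is a genuine pole of $\widetilde{\varphi_\zeta}$ for all nonzero real initial data. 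Its conjugate $\bar s_\zeta$ is also a pole, which is consistent with $\varphi$ being real.
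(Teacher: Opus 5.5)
Your proof is correct, but it reaches the zero by a different mechanism than the paper.

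The paper passes to $z=s^2=-u+iv$ and splits $F_\zeta(z)=0$ into its imaginary and real parts. For each $u$ in a fixed interval $I$ around $\Omega_0^2$, strict monotonicity of $v\mapsto\mathcal I(u,v)$ and the intermediate value theorem produce a unique, continuous $v_\zeta(u)$ with $c\zeta\le v_\zeta(u)\le C\zeta$. The bound $|\mathcal A(u,v)|\le C(1+\ln(1/v))$ then turns $\Re F_\zeta=0$ into a small perturbation of $\Omega_0^2-u=0$, which a second one-dimensional intermediate value argument solves. Finally, $s_\zeta$ is taken as the first-quadrant square root of $z_\zeta$.

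Your Step 1 contains the same two analytic inputs, written in the $s$-variable. The two-sided Poisson-kernel bound on $\Re\gtilde$ plays the role of $c\zeta\le v_\zeta\le C\zeta$. The logarithmic conjugate-kernel bound on $\Im\gtilde$ plays the role of the bound on $\mathcal A$. You combine them through a single two-dimensional Brouwer argument, on a rectangle of dimensions comparable to $\zeta$ by $\zeta\ln(1/\zeta)$, instead of two nested one-dimensional ones.

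What each route buys:
\begin{itemize}
\item The paper's route uses only the intermediate value theorem and no square-root branch. The price is that it needs uniqueness (strict monotonicity in $v$) to make $v_\zeta(u)$ a continuous function of $u$.
\item Your route needs Brouwer and a branch choice, but no uniqueness of anything. It is organized directly around the perturbative picture of $i\Omega_0$ drifting into $\Re s>0$ by $\tfrac{\kappa\zeta}{3}\gtilde$. It makes the localization $s_\zeta=i\Omega_0+O(\zeta\ln(1/\zeta))$ with $\Re s_\zeta\asymp\zeta$ explicit; the paper's estimates imply the same a posteriori.
\end{itemize}

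Your Step 3 is the paper's non-cancellation criterion, which you have derived explicitly rather than quoted.
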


\begin{theorem}\label{thm:bounded}
    Let $\sigma<0$, let $0<\omega_c<{\Omega_0\equiv\sqrt{-2\kappa\sigma/3}}$, and consider the spectral density
    \begin{equation}\label{eq:J-general}
        J_\zeta(\omega)\equiv\zeta\,j(\omega)\,\mathbf 1_{(0,\omega_c)}(\omega),
        \qquad \zeta>0,
    \end{equation}
    with $j\in C^{0,\beta}([0,\omega_c])$ for some $\beta>0$, $j(0)=0$, and $j>0$ on $(0,\omega_c]$. Set
    \begin{equation}
        \mathcal G\equiv\int_0^{\omega_c}\frac{j(\omega)}{\omega}\,\dd\omega,
        \qquad
        \zeta_{\mathrm b}\equiv{\frac{3\pi(\Omega_0-\omega_c)^2}{4\kappa\mathcal G}}.
    \end{equation}
    Then every $\zeta\in(0,\zeta_{\mathrm b})$ yields a bounded metric perturbation of the form of \eq{eq:final-thm}, with real constants $A_k,B_k$, frequencies $\omega_c<\nu_1<\nu_2<\Omega_0$, and $\varphi_{\zeta,\mathrm{tr}}(\eta)\to0$ as $\eta\to\infty$.
\end{theorem}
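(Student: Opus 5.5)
The plan is to carry out a Laplace-inversion analysis of $\widetilde{\varphi_\zeta}(s)=N(s)/D_\zeta(s)$ from \eq{eq:varphi(z)}, with $\widetilde\xi$ dropped, $N(s)=\tfrac{3}{2\kappa}(s\varphi_0+\dot\varphi_0)-\zeta\gtilde_1(s)\varphi_0$, and $D_\zeta(s)=\tfrac{3}{2\kappa}s^2-\zeta s\gtilde_1(s)-\sigma$. Here $\gtilde_1$ denotes the transform \eq{eq:dem-tilde-gamma} built from $j\mathbf 1_{(0,\omega_c)}$. Since the spectral density has compact support, $\gtilde_1$ continues analytically to $\mathbb C\setminus i[-\omega_c,\omega_c]$. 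So $\widetilde{\varphi_\zeta}$ is meromorphic off the cut $i[-\omega_c,\omega_c]$. I would show that it has exactly four simple poles $\pm i\nu_1,\pm i\nu_2$ on the imaginary axis outside the cut, no other singularities off the cut, and that the cut contributes a decaying transient. Bromwich inversion, deforming the contour onto the imaginary axis, then gives \eq{eq:final-thm}: the residues produce $A_k\cos(\nu_k\eta)+B_k\sin(\nu_k\eta)$, and these are real because $\widetilde{\varphi_\zeta}(\bar s)=\overline{\widetilde{\varphi_\zeta}(s)}$.

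\emph{Step 1: no zeros of $D_\zeta$ with $\Re s>0$.} Write $s\gtilde_1(s)=\tfrac{2}{\pi}\int_0^{\omega_c}\tfrac{j(\omega)}{\omega}\tfrac{s^2}{s^2+\omega^2}\dd\omega$. Taking imaginary parts of $D_\zeta(s)=0$ with $s=a+ib$, $a>0$, gives a Nyquist-type identity. I expect it, combined with the bound $\zeta\mathcal G<\tfrac{3\pi}{4\kappa}(\Omega_0-\omega_c)^2$, to force a contradiction. My first attempt is the following. Divide by $s^2$, so the equation becomes $\tfrac{3}{2\kappa}-\tfrac{\sigma}{s^2}\cdot\text{(stuff)}=\zeta\tfrac{2}{\pi}\int\tfrac{j/\omega}{s^2+\omega^2}$. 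Then use $|s^2+\omega^2|\ge$ something controlled by $|s^2+\Omega_0^2|$ and $\Omega_0^2-\omega_c^2$ to show that the perturbation term is strictly smaller than $|\tfrac{3}{2\kappa}(s^2+\Omega_0^2)|$ in a Rouché sense. I would apply this Rouché argument on the closed right half-plane minus small neighbourhoods of the cut, comparing $D_\zeta$ with $D_0=\tfrac{3}{2\kappa}(s^2+\Omega_0^2)$. The unperturbed zeros $\pm i\Omega_0$ lie on the axis. Because Rouché on the boundary is delicate there, I would also verify directly that for $s=ib$ with $|b|>\omega_c$ the function $D_\zeta$ is real. Then there is no crossing into $\Re s>0$ as long as $\Im D_\zeta\neq0$ for $\Re s>0$ small. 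This holds because the imaginary part has a definite sign, given by $ab$ times a positive combination that is dominated by $\tfrac{3}{\kappa}$ when $\zeta<\zeta_{\mathrm b}$.

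\emph{Step 2: locating the imaginary-axis zeros.} On $s=i\nu$ with $\nu>\omega_c$, set $f(\nu)\equiv -D_\zeta(i\nu)=\tfrac{3}{2\kappa}(\nu^2-\Omega_0^2)-\zeta\tfrac{2}{\pi}\int_0^{\omega_c}\tfrac{j(\omega)}{\omega}\tfrac{\nu^2}{\nu^2-\omega^2}\dd\omega$, which is real. As $\nu\downarrow\omega_c$, the Hölder continuity of $j$ and $j(\omega_c)>0$ make the integral diverge logarithmically to $+\infty$, so $f\to-\infty$. As $\nu\to\infty$, $f\to+\infty$. At $\nu=\Omega_0$, $f<0$. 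Some intermediate point is needed where $f>0$. The threshold $\zeta_{\mathrm b}$ is designed to give exactly that: I would evaluate $f$ at a point $\nu_*\in(\omega_c,\Omega_0)$, presumably near $(\Omega_0+\omega_c)/2$ or chosen optimally, and bound the integral by $\mathcal G\,\nu_*^2/(\nu_*^2-\omega_c^2)$. This yields one zero $\nu_1\in(\omega_c,\nu_*)$ and one zero $\nu_2\in(\nu_*,\Omega_0)$. Wait — that places two sign changes below $\Omega_0$, but $f(\Omega_0)<0$ and $f\to+\infty$ would give a third zero above $\Omega_0$. I will have to recheck the sign conventions here (the negative mass flips signs, and the intended picture is $f(\Omega_0)>0$ with two roots below $\Omega_0$). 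I will fix this carefully, keeping the structure \lq\lq divergence at $\omega_c$, sign at $\Omega_0$, intermediate value at $\nu_*$\rq\rq. Simplicity of the roots and the absence of extra roots should follow from monotonicity or convexity of the two terms of $f$.

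\emph{Step 3: the transient.} The branch-cut integral along $i(-\omega_c,\omega_c)$ has a density given by the jump of $\widetilde{\varphi_\zeta}$, which is proportional to $\zeta j(\omega)/|D_\zeta^\pm(i\omega)|^2$. I would show this density is integrable. Near $\omega=0$, $j(0)=0$ and Hölder continuity control the behaviour. Near $\omega_c$, the logarithmic growth of $D_\zeta$ helps. The Riemann--Lebesgue lemma then gives $\varphi_{\zeta,\mathrm{tr}}\to0$. I expect the main obstacle to be Step 1, namely excluding zeros in $\Re s>0$ with an explicit constant matching $\zeta_{\mathrm b}$, together with justifying the contour deformation (decay of $\widetilde{\varphi_\zeta}$ like $1/s$ on large arcs and endpoint behaviour at $\pm i\omega_c$).
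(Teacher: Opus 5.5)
\textbf{The gap is in Step 1.} Your Steps 2 and 3 essentially reproduce the paper's argument once the sign in Step 2 is fixed. The correct expression is $-D_\zeta(i\nu)=\frac{3}{2\kappa}(\nu^2-\Omega_0^2)+\frac{2\zeta}{\pi}\int_0^{\omega_c}\frac{j(\omega)}{\omega}\frac{\nu^2}{\nu^2-\omega^2}\,\dd\omega$. This function tends to $+\infty$ as $\nu\downarrow\omega_c$, is positive on $[\Omega_0,\infty)$, and is negative at $\nu_*^2=\Omega_0\omega_c$ whenever $\zeta<\zeta_{\mathrm b}$; that choice of $\nu_*$ is the one that reproduces $\zeta_{\mathrm b}$ exactly. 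Strict convexity in $\nu^2$ then gives exactly two simple roots, which is the paper's concavity argument for $F_\zeta$ on $(-\infty,-\omega_c^2)$.

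The argument you sketch in Step 1 is not just incomplete but false. For $s=a+ib$,
\[
\Im D_\zeta(a+ib)=2ab\Bigl[\frac{3}{2\kappa}-\frac{2\zeta}{\pi}\int_0^{\omega_c}\frac{j(\omega)\,\omega}{|s^2+\omega^2|^2}\,\dd\omega\Bigr],
\]
and the bracket has no definite sign:
\begin{itemize}
\item Near the cut $i(-\omega_c,\omega_c)$, the $\zeta$-term diverges for every $\zeta>0$.
\item As $s\to i\nu$ with $\nu>\omega_c$, the bracket tends to $\partial_y[-D_\zeta(i\sqrt y)]$ at $y=\nu^2$. By the very convexity you use in Step 2, this is negative at $\nu_1$ and positive at $\nu_2$, reflecting that the two modes have opposite Krein signature.
\end{itemize}
Rouch\'e against $D_0=\frac{3}{2\kappa}(s^2+\Omega_0^2)$ also fails. $D_0$ vanishes at $\pm i\Omega_0$ on the boundary, and $\zeta s\gtilde_1(s)$ diverges logarithmically at the branch points $\pm i\omega_c$ for every $\zeta>0$. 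A continuation in $\zeta$ from $\zeta=0$ cannot close the gap either. The zero count is discontinuous at $\zeta=0$: the root $\nu_1$ is born at the branch point, with $\nu_1\downarrow\omega_c$ as $\zeta\downarrow0$. Nothing in your plan prevents zeros from emerging out of the cut into $\Re s\neq0$.

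\textbf{What is missing is a global zero count that uses the boundary values on the cut.} By the Plemelj formula, $F_{\zeta,\pm}(x)=A_\zeta(x)\mp i\zeta j(\sqrt{-x})$ on $(-\omega_c^2,0)$. The imaginary part has a fixed nonzero sign on each face, and $A_\zeta$ runs from $-\infty$ at $-\omega_c^2$ to $3\Omega_0^2/(2\kappa)$ at $0$. Apply the argument principle to $F_\zeta(z)$, $z=s^2$, on a large circle plus a clockwise keyhole around $[-\omega_c^2,0]$. This gives $\Delta\arg F_\zeta=2\pi+\pi+\pi$, so $F_\zeta$ has exactly two zeros in $\mathbb C\setminus[-\omega_c^2,0]$, and they must be the two real ones from Step 2. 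This excludes poles with $\Re s>0$. Since $D_\zeta$ is even in $s$, it also excludes poles with $\Re s<0$, which you need when closing the Bromwich contour to the left as the paper does. It further shows that neither face of the cut vanishes, which your jump formula in Step 3 tacitly assumes. An equivalent route is a Nyquist count of $D_\zeta$ along the right face of the imaginary axis, indented to the right of $\pm i\nu_k$, which gives winding number zero; it rests on the same boundary-value information.
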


\prlendmattersection{Power-law spectral densities with a sharp cutoff}\label{app:powerlaw}%
The spectral density of \eq{eq:J-sharp} satisfies the hypotheses of Theorem~\ref{thm:bounded} with $j(\omega)=\omega^{p}$. The decay laws of the resulting transient $\varphi_{\zeta,\mathrm{tr}}(\eta)$ in \eq{eq:final-thm}, displayed in Fig.~\ref{fig:bounded-ev}(b), are governed by the endpoints $\omega=0$ and $\omega=\omega_c$ of its spectral representation, the interior contributing faster-decaying terms; they are stated in the following theorem and derived in the Supplemental Material~\cite{SM}.

\medskip
Define, for $0<\omega<\omega_c$,
\begin{multline}\label{eq:A-powerlaw}
A_\zeta(-\omega^2)
\equiv {\frac{3}{2\kappa}}(\Omega_0^2-\omega^2)\\
+\frac{2\zeta}{\pi}\,\omega^2\,\operatorname{p.v.}\!\int_0^{\omega_c}
\frac{(\omega')^{p-1}}{(\omega')^2-\omega^2}\,\dd\omega'.
\end{multline}

\begin{theorem}\label{thm:X}
Let $\sigma<0$, $0<\omega_c<\Omega_0$, $0<\zeta<\zeta_{\mathrm b}$, and $(\varphi_0,\dot\varphi_0)\neq(0,0)$. Define
\begin{equation}
    R_0 \equiv \Big[ \dot\varphi_0^{2} + \left( \tfrac{\Omega_0^{2}\varphi_0}{\omega_c} \right)^{2} \Big]^{1/2}, \; \delta \equiv \operatorname{atan2} \left( \tfrac{\Omega_0^{2}\varphi_0}{\omega_c}, \dot\varphi_0 \right),
\end{equation}
and
\begin{equation}
a_c(\eta)\equiv{\frac{3\pi R_0}{\kappa\zeta\,\omega_c^{\,p}\,\eta\,\bigl[\ln^{2}(\eta/\eta_*)+\pi^{2}\bigr]}},
\label{X:ac}
\end{equation}
with $\eta_*\equiv\omega_c^{-1}e^{\pi\beta_c/(\zeta\omega_c^{p})}$,
\begin{equation}\label{X:betac}
\beta_c\equiv\lim_{\omega\uparrow\omega_c}\Bigl[A_\zeta(-\omega^{2})-\frac{\zeta}{\pi}\,\omega^{p}\ln\Bigl(\frac{\omega_c-\omega}{\omega_c}\Bigr)\Bigr].
\end{equation}

Then, as $\eta\to\infty$,
\begin{multline}\label{X:main}
\varphi_{\zeta,\mathrm{tr}}(\eta)=s(\eta)
+a_c(\eta)\cos(\omega_c\eta+\delta)
\\
+\Oh \big(\eta^{-1}\ln^{-3}(\eta)\big),
\end{multline}
where, with $q=\min\{p,2\}$,
\begin{equation}\label{X:s}
    \begin{aligned}
    s(\eta)
    ={}&
    -{\frac{4\kappa\zeta\Gamma(p)\cos(\pi p/2)}
    {3\pi\Omega_0^2}}\,
    \varphi_0\,\eta^{-p}\\
    &-
    {\frac{4\kappa\zeta\Gamma(p+1)\cos(\pi p/2)}
    {3\pi\Omega_0^4}}\,
    \dot\varphi_0\,\eta^{-p-1}\\
    &+
    \varphi_0\,\Oh\!\left(\eta^{-p-q}\ln\eta\right)\\
    &+
    \dot\varphi_0\,\Oh\!\left(\eta^{-p-q-1}\ln\eta\right).
    \end{aligned}
\end{equation}
\end{theorem}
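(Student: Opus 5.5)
Starting from the Laplace representation \eq{eq:varphi(z)} with $\langle\widetilde\xi\rangle=0$, the plan is to write $\widetilde\varphi_\zeta(s)=N(s)/D(s)$. For the sharp-cutoff power law, $\gtilde$ continues analytically to $\mathbb C\setminus[-i\omega_c,i\omega_c]$. By Theorem~\ref{thm:bounded}, for $\zeta<\zeta_{\mathrm b}$ the only zeros of $D$ off the cut are the two pairs $\pm i\nu_k$ with $\omega_c<\nu_k<\Omega_0$. Their residues produce the oscillatory sum in \eq{eq:final-thm}. We then deform the Bromwich contour onto the cut, so that the transient becomes a spectral integral
\[
\varphi_{\zeta,\mathrm{tr}}(\eta)=\frac{1}{\pi}\int_0^{\omega_c}\Im\!\left[\frac{N(-i\omega+0)}{D(-i\omega+0)}\right]\cos(\omega\eta)\,\dd\omega+(\sin\text{ part}),
\]
which we organize as $\int_0^{\omega_c}\rho(\omega)e^{i\omega\eta}\dd\omega$ plus its complex conjugate. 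On the cut, the real part of $D(-i\omega\pm0)$ equals $-A_\zeta(-\omega^2)$ as defined in \eq{eq:A-powerlaw}, and the imaginary part is $\mp\zeta\omega^{p}$ (from $J$). The spectral weight therefore has the form
\[
\rho(\omega)\propto\frac{\zeta\omega^{p}\,\bigl(\text{linear combination of }\varphi_0,\dot\varphi_0\bigr)}{A_\zeta(-\omega^2)^2+\zeta^2\omega^{2p}}.
\]
Because $\zeta<\zeta_{\mathrm b}$, $A_\zeta$ is bounded away from zero on $[0,\omega_c)$, so $\rho$ is smooth in the interior. Integrating by parts then shows that the interior contributes terms that decay faster than any of the claimed orders. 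This reduces the problem to an endpoint analysis (Erdélyi/Watson-type) at $\omega=0$ and $\omega=\omega_c$.

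\emph{Endpoint $\omega=0$.} Near zero, $A_\zeta(-\omega^2)\to\frac{3}{2\kappa}\Omega_0^2$ up to corrections of order $\zeta\omega^{q}$, together with $\omega^{p}$-type terms coming from the principal value (and a $\ln$ when $p=2$). The numerator contributes $\varphi_0\omega^{p-1}$ and $\dot\varphi_0\omega^{p}$ type terms. Expanding $\rho$ as $c_1\omega^{p-1}+c_2\omega^{p}+\Oh(\omega^{p-1+q}\ln\omega)$ and applying the standard formula $\int_0^\infty\omega^{a-1}\cos(\omega\eta)\dd\omega=\Gamma(a)\cos(\pi a/2)\eta^{-a}$ gives the leading terms of $s(\eta)$ in \eq{X:s}. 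The constant $\tfrac{4\kappa\zeta}{3\pi\Omega_0^2}$ comes from $\zeta/(\pi\cdot\tfrac{3}{2\kappa}\Omega_0^2)$, and the $\dot\varphi_0$ term picks up one additional factor of $\Omega_0^{-2}$ and one additional power of $\eta^{-1}$. The remainder estimates follow from the next-order expansion.

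\emph{Endpoint $\omega=\omega_c$.} This is the main obstacle. The principal value in \eq{eq:A-powerlaw} has a logarithmic singularity, so by \eq{X:betac}
\[
A_\zeta(-\omega^2)=\beta_c+\tfrac{\zeta}{\pi}\omega_c^{p}\ln\!\bigl(\tfrac{\omega_c-\omega}{\omega_c}\bigr)+o(1),
\]
while $\Im D\to\zeta\omega_c^p$ stays finite. It is cleaner to keep the complex form $1/D$ with $D\approx\frac{\zeta\omega_c^{p}}{\pi}\bigl[\ln(u/\omega_c)+\pi\beta_c/(\zeta\omega_c^p)\pm i\pi\bigr]$, where $u=\omega_c-\omega$. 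The endpoint integral then reduces to $\int_0 e^{-iu\eta}\,\dd u/[\ln(u\eta_*)\pm i\pi]$. Rotating $u$ onto the imaginary axis and substituting $u=t/\eta$ turns this into an integral of the form $\eta^{-1}\int_0^\infty e^{-t}\dd t/[\ln t-\ln(\eta/\eta_*)+\dots]$. Expanding the logarithm for large $L=\ln(\eta/\eta_*)$ gives a leading term proportional to $\eta^{-1}\cdot\Im\bigl[1/(L-i\pi)\bigr]$-type combinations. Taking the imaginary part exactly, rather than truncating the expansion, produces the $[\ln^2(\eta/\eta_*)+\pi^2]^{-1}$ envelope, with error $\Oh(\eta^{-1}L^{-3})$.

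The numerator evaluated at $\omega_c$ gives the amplitude $R_0$ and phase $\delta$: the $\varphi_0$ term arrives with the factor $\Omega_0^2/\omega_c$ after $D(i\omega_c)$ terms are simplified using $\frac{3}{2\kappa}\Omega_0^2=-\sigma$. Combining this with the prefactor $\pi/(\zeta\omega_c^p)\cdot\frac{3}{\kappa}$ yields $a_c(\eta)\cos(\omega_c\eta+\delta)$. We expect the most care to be needed in controlling the $o(1)$ correction in $A_\zeta$ near $\omega_c$, which is Hölder, so that it contributes only at the order $\eta^{-1}\ln^{-3}\eta$.
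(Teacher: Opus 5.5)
Your plan has the same skeleton as the paper's proof. The transient is the cut integral inherited from Theorem~\ref{thm:bounded}, with weight proportional to $\zeta\omega^{p}/Q$, where $Q=A_\zeta(-\omega^2)^2+\zeta^2\omega^{2p}$, multiplied by $\Omega_0^2\varphi_0/\omega$ or $\dot\varphi_0$. A partition of unity makes the interior $\mathcal{O}(\eta^{-N})$. At $\omega=0$ one freezes $Q^{-1}$ at $4\kappa^2/(9\Omega_0^4)$ and applies the Erd\'elyi formula.

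\emph{Where the routes differ ($\omega=\omega_c$).} The paper stays on the real axis. It writes $A_\zeta=c\,\ell+b_{\mathrm r}$, factors $Q=c^2(\ell^2+\pi^2)D$ exactly, and evaluates the model weight $(\ell^2+\pi^2)^{-1}$ by two integrations by parts (Lemma~\ref{lem:X}(c)). The remainder is absorbed through symbol bounds and Lemma~\ref{lem:X}(a). You instead approximate each face by $\frac{\zeta\omega_c^p}{\pi}(\Lambda\mp i\pi)$, with $\Lambda=\ln(u\eta_*)$, and rotate into $\Im u<0$.
\begin{itemize}
\item The rotation is legitimate because you rotate the explicit model, which is analytic for $-\pi/2\le\arg u\le 0$. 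The true face $D_{\zeta,+}$ could not be rotated this way without continuation onto the second sheet.
\item The cutoff $\chi$ must first be removed by extending the model to $(0,\infty)$.
\item If you expand the two faces separately, their $\mathcal{O}(\eta^{-1}L^{-1})$ leading terms cancel, so each must be carried through its $\gamma_E$ term. It is cleaner to rotate the exact jump $2\pi i(\Lambda^2+\pi^2)^{-1}$, which is itself analytic in that sector; this amounts to a contour proof of Lemma~\ref{lem:X}(c).
\end{itemize}
Your route yields a full series in $1/L$, $L=\ln(\eta/\eta_*)$. For instance, the model jump equals $\frac{2\pi}{\eta}\bigl[L^{-2}-(2\gamma_E+i\pi)L^{-3}+\mathcal{O}(L^{-4})\bigr]$. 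The paper's route needs neither continuation nor regularization. Nothing hinges on getting the $\pi^2$ ``exactly'': $[\eta(\ln^2(\eta/\eta_*)+\pi^2)]^{-1}-[\eta\ln^2\eta]^{-1}=\mathcal{O}(\eta^{-1}\ln^{-3}\eta)$, so at the stated precision even $\eta_*$ is immaterial.

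\emph{Where the proposal is not yet a proof.} The error control is missing, and one claim is wrong as stated.
\begin{itemize}
\item \emph{$A_\zeta$ is not bounded away from zero.} By \eqref{eq:A-lim}, $A_\zeta(-\omega^2)$ runs from $3\Omega_0^2/(2\kappa)>0$ to $-\infty$, so it vanishes somewhere in $(0,\omega_c)$. Interior smoothness holds for a different reason: $A_\zeta$ is smooth there and $Q\ge\zeta^2\omega^{2p}>0$.
\item \emph{Pointwise remainders give no decay rate for oscillatory integrals.} At $\omega=0$, turning $\mathcal{O}(\omega^{p-1+q}\ln\omega)$ into $\mathcal{O}(\eta^{-p-q}\ln\eta)$ needs the symbol bounds $|\partial_\omega^k r|\lesssim\omega^{q-k}(1+|\ln\omega|)$ of \eqref{X:sym}, together with Lemma~\ref{lem:X}(a).
\item \emph{H\"older is not enough at $\omega_c$.} An $o(1)$ or merely H\"older correction to $A_\zeta$ gives, in general, no rate beyond its modulus of continuity, far from $\eta^{-1}\ln^{-3}\eta$. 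So the H\"older hypothesis of Theorem~\ref{thm:bounded} does not suffice here.
\end{itemize}
The ingredient you need is that $\beta(\omega)=A_\zeta(-\omega^2)-\frac{\zeta}{\pi}\omega^p\ln\frac{\omega_c-\omega}{\omega_c}$ is smooth up to $\omega_c$. This follows from the smoothness of $j=\omega^p$ there, by subtracting the pole of the principal value as in \eqref{X:pole}. With it, true minus model on each face is $\mathcal{O}(x/|\ln x|)$, with $x=(\omega_c-\omega)/\omega_c$ and $k$-th derivatives $\mathcal{O}(x^{1-k}/|\ln x|)$. That contributes only $\mathcal{O}(\eta^{-2}\ln^{-1}\eta)$, so the $\mathcal{O}(\eta^{-1}\ln^{-3}\eta)$ error at $\omega_c$ comes solely from truncating the $1/L$ expansion.

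\emph{A bookkeeping slip.} Your $\zeta/(\pi\cdot\frac{3}{2\kappa}\Omega_0^2)$ is half of $\frac{4\kappa\zeta}{3\pi\Omega_0^2}$. The dropped factor $2$ comes from the jump $1/D_{\zeta,+}-1/D_{\zeta,-}=2i\zeta j/Q$.
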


These are the dashed decay rates shown in Fig.~\ref{fig:bounded-ev}(b): 
for $\varphi_0 \neq 0$ and $0 < p < 1$ (sub-Ohmic) the transient decays as  $\eta^{-p}$, whereas for $p \geq 1$, or for $\varphi_0 = 0$, the oscillatory contribution $a_c(\eta)\cos(\omega_c\eta+\delta)$ dominates, so the r.m.s.\ decays as $\Oh ([\eta\ln^2(\eta/\eta_*)]^{-1})$. Indeed, for $p=1$ one has $q=1$ and $\cos(\pi/2)=0$, so, by \eq{X:s}, $s(\eta)=\Oh(\eta^{-2}\ln\eta) = o\big(\eta^{-1}\ln^{-3}(\eta)\big)$, and this contribution can be absorbed into the remainder; for $p>1$, $\varphi_0\neq0$, \eq{X:s} gives $s(\eta)=\Oh(\eta^{-p}) =o\big(\eta^{-1}\ln^{-3}(\eta)\big)$. If $\varphi_0=0$, \eq{X:s} instead gives $s(\eta)=\Oh(\eta^{-p-1})=o\big(\eta^{-1}\ln^{-3}(\eta)\big)$ for every $p>0$, so the oscillatory contribution always dominates.

\nocite{Muskhelishvili1953,Bingham:1989rv}

\onecolumngrid
\vspace{5\baselineskip}
\begin{center}
    \textbf{\large Supplemental Material}
\end{center}
\vspace{0.15\baselineskip}
\twocolumngrid
\setcounter{equation}{0}
\setcounter{theorem}{0}
\renewcommand{\theequation}{S\arabic{equation}}
\renewcommand{\thetheorem}{S\arabic{theorem}}
\renewcommand{\theHequation}{S.\arabic{equation}}
\renewcommand{\theHtheorem}{S.\arabic{theorem}}
\input{supplemental-material}

\end{document}

%% file: supplemental-material.tex
This Supplemental Material derives the environmental action expansion and collects the proofs of the results stated in the End Matter. We prove Theorems~\ref{theorem}--\ref{thm:X} and the auxiliary lemma stated there. Throughout, equation, theorem, and figure numbers without an ``S'' prefix refer to the Letter; references with an ``S'' prefix refer to this Supplemental Material.

\section{Environmental action expansion}

Consider the one-parameter family $\Theta_\varepsilon \equiv (\gamma_{ij},\pi^{ij},N,N^i,q^I,p_I)_\varepsilon$ of Eqs.~\eqref{eq:adm-perturbations} and \eqref{eq:expansion}, and let $\Psi_\varepsilon \equiv (\gamma_{ij},\pi^{ij},q^I,p_I)_\varepsilon$ denote its canonical variables. We write
\begin{equation}
    F^{(n)}
    \equiv
    \left.
    \frac{1}{n!}
    \frac{\dd^nF_\varepsilon}{\dd\varepsilon^n}
    \right|_{\varepsilon=0}
\end{equation}
to describe the $\varepsilon$-expansion coefficients for $\Theta_\varepsilon$, and the associated constraint densities and action.

In particular,
\begin{equation}
S_{{\rm env},\varepsilon}
=S_{\rm env}^{(0)}
+\varepsilon S_{\rm env}^{(1)}
+\varepsilon^2S_{\rm env}^{(2)}
+\Oh(\varepsilon^3).
\end{equation}

As stated in the Letter, we assume that (i) Minkowski metric background with the reference environmental configuration, $\Theta_0$, is a solution of the unperturbed theory and (ii) satisfies Eq.~\eqref{eq:min}.

$S_{\rm env}^{(0)}$ is a constant (the value of the environmental action on the background metric and reference configuration) and can be ignored. Both $S_{\rm global}$ and $S_g$ are stationary at $\Theta_0$, since Minkowski spacetime also solves the vacuum gravitational equations. Hence their difference, $S_{\rm env}=S_{\rm global}-S_g$ must be stationary too, satisfying
\begin{equation}\label{var1}
    \left.\frac{\delta S_{\rm env}}{\delta\Theta^a}\right|_{\Theta_0} = 0,
\end{equation}
so $S_{\rm env}^{(1)}=0$.

The quadratic term reads
\begin{multline}\label{S2-full}
S^{(2)}_{\rm env}
= \left.\frac{\delta S_{\rm env}}{\delta\Theta^a}\right|_{\Theta_0}
[\Theta^{a\,(2)}]
\\
{}+\frac12\left.\frac{\delta^2 S_{\rm env}}{\delta\Theta^a\delta\Theta^b}\right|_{\Theta_0}
[\Theta^{a\,(1)},\Theta^{b\,(1)}],
\end{multline}
where the first term vanishes by Eq.~\eqref{var1}. Hence,
\begin{equation}
    S_{\rm env}^{(2)}
=\frac12\left.\frac{\delta^2 S_{\rm env}}{\delta\Theta^a\delta\Theta^b}\right|_{\Theta_0}
[\Theta^{a\,(1)},\Theta^{b\,(1)}].
\end{equation}

Using $N^{(0)}=1$ and $(N^i)^{(0)}=0$, we obtain\footnote{
We assume that any boundary contribution from the environment at the fiducial cell takes the form $\int \dd t \oint_{\partial\mathcal V_0} \mathcal J^i(\Theta,\partial\Theta,\ldots)\,\dd \Sigma_i$ (as is the case for local tensor-field theories). Then, at quadratic order, for homogeneous perturbations this term reduces to $\int \dd t \, j^i(t) \oint_{\partial\mathcal V_0}\dd \Sigma_i=0$, since $\oint_{\partial\mathcal V_0}\dd \Sigma_i=0$.}
\begin{multline}\label{S2-appendix}
 S_{\rm env}^{(2)}
= V_0 \int\dd t\,\dd^3x\,\Bigl(
p_I^{(1)}\dot q^{I\,(1)}
\\
{}-\frac12\left.\frac{\partial^2\mathcal H_{\rm env}}{\partial\Psi^n\partial\Psi^m}\right|_{\Psi_0}
[\Psi^{n\,(1)},\Psi^{m\,(1)}]
\\
{}-N^{(1)}\left.\frac{\partial\mathcal H_{\rm env}}{\partial\Psi^n}\right|_{\Psi_0}
[\Psi^{n\,(1)}]
\\
{}-(N^i)^{(1)}\left.\frac{\partial\mathcal H_{i,\rm env}}{\partial\Psi^n}\right|_{\Psi_0}
[\Psi^{n\,(1)}]\Bigr).
\end{multline}

The background metric equation gives $\left.\partial\mathcal H_{\rm env}/\partial\gamma_{ij}\right|_{\Psi_0}=0$. Together with Eq.~\eqref{eq:min} and using that $\mathcal H_{\rm env}$ does not depend on $\pi^{ij}$, this implies
\begin{equation}\label{Henv-null-1}
\left.\frac{\partial\mathcal H_{\rm env}}{\partial\Psi^n}\right|_{\Psi_0}=0.
\end{equation}

Hence the lapse term in Eq.~\eqref{S2-appendix} vanishes. The shift term is removed because $(N^i)^{(1)}=0$ for the perturbations in Eq.~\eqref{eq:adm-perturbations}. 

Thus, the quadratic environmental action is reduced to
\begin{multline}\label{S2-appendix-reduced}
S_{\mathrm{env}}^{(2)}
= V_0 \int\dd t\,\dd^3x\,\Bigl(
p_I^{(1)}\dot q^{I\,(1)}
\\
-\frac12 \frac{\partial^2 \mathcal H_{\mathrm{env}}}{\partial \Psi^n \partial \Psi^m}\bigg|_{\Psi_0}[\Psi^{n\,(1)},\Psi^{m\,(1)}]
\Bigr),
\end{multline}
which identifies with the expression in Eq.~\eqref{eq:action-2}.

\section{Conditions for bounded evolution}

\begin{proof}[Proof of Theorem~\ref{theorem}]
    The assumption on $J$ means that there exist $J_+>J_->0$ and $0<r<\Omega_0$ such that $J_-<J(\omega)<J_+$ for a.e. $\omega \in (\Omega_0-r, \Omega_0+r)$.

    \smallskip
    Now, using \eq{eq:gamma-J-rel}, $\widetilde{\gamma_\zeta}$ can be expressed as
    \begin{equation}
        \widetilde{\gamma_\zeta}(s) =  \frac{2}{\pi} \zeta \int_0^\infty \frac{J(\omega)}{\omega}\frac{s}{s^2+\omega^2}\,\dd\omega,
        \qquad \Re s>0.
    \end{equation}

    Therefore, $D_\zeta$ in \eq{eq:D(s)} reads
    \begin{equation}\label{eq:dem-D}
        D_\zeta(s)
        =
        {\frac{3}{2\kappa}}(s^2+\Omega_0^2)
        -\frac{2}{\pi} \zeta\int_0^\infty \frac{J(\omega)}{\omega}\frac{s^2}{s^2+\omega^2}\,\dd\omega.
    \end{equation}

    As $D_\zeta$ depends on $s^2$, we define
    \begin{equation}\label{eq:dem-F}
        F_\zeta(z)
        \equiv
        {\frac{3}{2\kappa}} (z+\Omega_0^2)
        -\frac{2}{\pi} \zeta\int_0^\infty  \frac{J(\omega)}{\omega}\frac{z}{z+\omega^2}\,\dd\omega.
    \end{equation}
    The poles sought correspond to zeros of $D_\zeta(s) = F_\zeta(s^2)$ in the right half-plane. Without loss of generality, we look for zeros of $F_\zeta(z)$ expressed as
    \begin{equation}\label{eq:dem-z}
        z=-u+iv,
        \qquad u>0,\quad v>0.
    \end{equation}
    To find such poles, we analyze $\Im F_\zeta(z) = 0$ and $\Re F_\zeta(z) = 0$ separately below.

    \medskip
    $\blacktriangleright$ $\Im F_\zeta(z) = 0$:

    \smallskip
    Using Eqs.~\eqref{eq:dem-F} and \eqref{eq:dem-z}, this equation is expressed as
    \begin{equation}\label{eq:dem-imeq}
        \mathcal I (u,v)={\frac{3\pi}{4\kappa\zeta}},
    \end{equation}
    where
    \begin{equation}
        \mathcal I(u,v) \equiv \int_0^\infty \frac{J(\omega)}{\omega}\frac{\omega^2}{(\omega^2-u)^2+v^2}\,\dd\omega.
    \end{equation}

    Note that $v \mapsto \mathcal{I}(u,v)$ is continuous and strictly decreasing in $(0,\infty)$. Moreover, for every fixed $u$,
    \begin{equation}
        \lim_{v \to \infty} \mathcal I (u,v) = 0.
    \end{equation}

    As shown below, for fixed $u\in I$ with
    \begin{equation}
        I=[u_-,u_+],
        \qquad
        u_\pm
        \equiv
        \left(\Omega_0\pm\frac r2\right)^2,
    \end{equation}
    we also have
    \begin{equation}
        \lim_{v \downarrow 0} \mathcal I (u,v) = +\infty.
    \end{equation}

    Indeed, define $t\equiv\omega^2-u$ and take $\tau>0$ such that $\sqrt{u+t}\in(\Omega_0-r,\Omega_0+r)$ for every $u\in I$, $|t| \leq \tau$. Then, 
    \begin{equation}\label{eq:dem-I-lower-bound}
        \begin{aligned}
            \mathcal I(u,v)
            &\ge
            \frac{J_-}2
            \int_{-\tau}^{\tau}
            \frac{\dd t}{t^2+v^2} \\
            &=
            \frac{J_-}{v}
            \arctan\frac{\tau}{v},
            \end{aligned}
    \end{equation}
    which tends to $+\infty$ as $v\downarrow 0$.

    Therefore, we conclude that, for every $u\in I$, \eq{eq:dem-imeq} has a unique solution $v_\zeta(u)> 0$.

    Before solving $\Re F_\zeta(z) = 0$ we need to establish the following bounds for $v_\zeta(u)$.

    \medskip
    $\circ $ $c\, \zeta\le v_\zeta(u)\le C\,\zeta$, for all $u\in I$ and certain $c, C>0$:

    \smallskip
    Defining $t=\omega^2-u$,
    \begin{equation}
        \begin{aligned}
            \int_{\Omega_0-r}^{\Omega_0+r}
            \frac{J(\omega)\omega}
                {(\omega^2-u)^2+v^2}\,\dd\omega
            &\le
            \frac{J_+}2
            \int_{\mathbb R}\frac{\dd t}{t^2+v^2} \\
            &=
            \frac{\pi J_+}{2v},
        \end{aligned}
    \end{equation}
    and also, since \(|\omega^2-u|\) is uniformly bounded away from zero for \(u\in I\),
    \begin{equation}
        \int_{(0,\infty) \setminus (\Omega_0-r, \Omega_0+r)} \frac{J(\omega)\omega}
                {(\omega^2-u)^2+v^2}\,\dd\omega \leq C_2.
    \end{equation}

    Consequently, for $u \in I$ and certain constants $C_1, C_2>0$,
    \begin{equation}\label{eq:dem-I-ineq}
        \mathcal I(u,v)\le\frac{C_1}{v}+C_2.
    \end{equation}

    Applying \eq{eq:dem-I-ineq} at $v=v_\zeta(u)$ and using \eq{eq:dem-imeq},
    \begin{equation}
        {\frac{3\pi}{4\kappa\zeta}}
        \le
        \frac{C_1}{v_\zeta(u)}+C_2,
    \end{equation}
    which, for sufficiently small $\zeta$, such that ${\frac{3\pi}{4\kappa\zeta}}-C_2 \ge {\frac{3\pi}{8\kappa\zeta}}$, translates into
    \begin{equation}\label{eq:dem-v-up}
        v_\zeta(u)\le C\zeta
    \end{equation}
    uniformly for \(u\in I\).

    Now, for sufficiently small $\zeta$, \eq{eq:dem-v-up} ensures that \(v_\zeta(u)\le\tau\), and by \eq{eq:dem-I-lower-bound},
    \begin{equation}
        \mathcal I(u,v_\zeta(u))
        \ge
        \frac{\pi}{4}\frac{ J_-}
            {v_\zeta(u)}.
    \end{equation}

    Thus, using \eq{eq:dem-imeq} we conclude that
    \begin{equation}\label{eq:dem-v-zeta-lower-bound}
        v_\zeta(u)\ge c\zeta
    \end{equation}
    uniformly for \(u\in I\).

    Note that, in particular, $c\, \zeta\le v_\zeta(u)\le C\,\zeta$ and continuity of $\mathcal{I}$ imply that \(u\mapsto v_\zeta(u)\) is continuous.

    \medskip
    $\blacktriangleright$ $\Re F_\zeta(z) = 0$:

    \smallskip
    Using Eqs.~\eqref{eq:dem-F} and \eqref{eq:dem-z}, this equation is expressed as
    \begin{equation}\label{eq:dem-reeq}
        (\Omega_0^2-u) - {\frac{4\kappa\zeta}{3\pi}}\mathcal A(u,v) = 0,
    \end{equation}
    where
    \begin{equation}
        \mathcal A(u,v) \equiv \Re \int_0^\infty \frac{J(\omega)}{\omega} \frac{-u+iv}{\omega^2-u+iv}\,\dd\omega.
    \end{equation}

    We next show that
    \begin{equation}\label{eq:dem-A-limit}
        \lim_{\zeta\downarrow0} \, \sup_{u\in I} \zeta|\mathcal A(u,v_\zeta(u))| = 0.
    \end{equation}

    Using $\frac{z}{z+\omega^2} = 1-\frac{\omega^2}{z+\omega^2},$ we obtain
    \begin{multline}\label{eq:dem-A-upperbound}
        |\mathcal A(u,v)| \le \left\|\frac{J(\omega)}{\omega}\right\|_{L^1}\\+ \int_0^\infty \frac{J(\omega)\omega}{\sqrt{(\omega^2-u)^2+v^2}}\,\dd\omega.
    \end{multline}

    For the integral term in \eq{eq:dem-A-upperbound}, note that the integral outside \((\Omega_0-r,\Omega_0+r)\) is uniformly bounded for \(u\in I\). Inside this interval, the substitution \(t=\omega^2-u\) gives
    \begin{multline}
        \int_{\Omega_0-r}^{\Omega_0+r}
        \frac{J(\omega)\omega}
            {\sqrt{(\omega^2-u)^2+v^2}}\,\dd\omega\\
        \le
        \frac{J_+}2
        \int_{-T}^{T}
        \frac{\dd t}{\sqrt{t^2+v^2}} = J_+ \operatorname{arsinh}\frac{T}{v},
    \end{multline}
    for sufficiently large $T>0$.

    Therefore, using that $\operatorname{arsinh}\frac{T}{v} \leq C\left(1+\ln\frac1v\right)$ for $0<v\le1$, we obtain, for $u\in I$ and $0<v\le1$,
    \begin{equation}
        |\mathcal A(u,v)|\le C\left(1+\ln\frac1v\right),
    \end{equation}
    from which \eq{eq:dem-A-limit} follows.

    Now, to solve \eq{eq:dem-reeq}, we define for $u\in I$,
    \begin{equation}
        R_\zeta(u) \equiv (\Omega_0^2-u) - {\frac{4\kappa\zeta}{3\pi}}\mathcal A(u,v_\zeta(u)).
    \end{equation}

    Since $\Omega_0^2-u_->0$ and $\Omega_0^2-u_+<0$, for sufficiently small $\zeta$
    \begin{equation}
        R_\zeta(u_-)>0, \qquad R_\zeta(u_+)<0,
    \end{equation}
    which, given that $R_\zeta$ is continuous, gives \(u_\zeta\in I\) such
    that
    \begin{equation}
        R_\zeta(u_\zeta)=0.
    \end{equation}

    \medskip
    Defining $z_\zeta \equiv -u_\zeta + i\,v_\zeta(u_\zeta)$, we have $F_\zeta(z_\zeta)=0$, so the square roots of $z_\zeta$ are zeros of $D_\zeta$. Since $u_\zeta, v_\zeta(u_\zeta)>0$, $z_\zeta$ lies in the upper-left quadrant, and we let $s_\zeta$ be its square root in the first quadrant. Then $\Re s_\zeta, \Im s_\zeta > 0$, so $s_\zeta$ is a zero of $D_\zeta$ in the right half-plane that is not purely real. It therefore cannot be cancelled by a zero of the numerator in \eq{eq:varphi(z)} provided $(\varphi_0, \dot \varphi_0) \neq (0, 0)$, and hence $s_\zeta$ is a pole of $\widetilde{\varphi_\zeta}$.
\end{proof}

\begin{proof}[Proof of Theorem~\ref{thm:bounded}]
    We use $D_\zeta(s)$ and $F_\zeta(z)$ from Eqs.~\eqref{eq:dem-D} and \eqref{eq:dem-F}, related by $D_\zeta(s)=F_\zeta(s^2)$. For the spectral densities of \eq{eq:J-general}, the split $z/(z+\omega^2)=1-\omega^2/(z+\omega^2)$ isolates the singular part of $F_\zeta$,
    \begin{equation}\label{eq:F-def}
        F_\zeta(z)= {\frac{3}{2\kappa}}(z+\Omega_0^2)-\frac{2}{\pi}\zeta \big[ {\mathcal G} - \Phi(z)\big],
    \end{equation}
    with
    \begin{equation}\label{eq:Phi}
        \Phi(z)\equiv \int_0^{\omega_c}\frac{\omega\,j(\omega)}{z+\omega^2}\,\dd \omega = \frac12\int_0^{\omega_c^2} \frac{j(\sqrt{u})}{u-u_z}\,\dd u,
     \end{equation}
    with $u\equiv\omega^2$ and $u_z\equiv-z$.

    The kernel $\Phi$ is singular only for $z\in(-\omega_c^2,0)$. Hence $F_\zeta$ is holomorphic on $\mathbb C\setminus[-\omega_c^2,0]$, and $D_\zeta$ is holomorphic on $\mathbb C\setminus i[-\omega_c,\omega_c]$.

    The metric perturbation for $\eta>0$ follows from the inversion
    \begin{equation}\label{eq:inv-laplace}
        \varphi_\zeta(\eta)=\frac{1}{2\pi i}\int_{\varsigma-i\infty}^{\varsigma+i\infty}
        e^{s\eta}\,\widetilde\varphi_\zeta(s)\,\dd s,
        \quad \varsigma>0.
    \end{equation}

    We evaluate it by closing the Bromwich contour to the left and applying the residue theorem, in three steps: (i) determine the boundary values of $D_\zeta$ across the cut $i[-\omega_c,\omega_c]$; (ii) locate the zeros of $D_\zeta$ off the cut, which supply the enclosed poles; (iii) read off $\varphi_\zeta$ as a residue sum plus a cut integral.

    \medskip
    \emph{Boundary values across the cut.}
    The two faces of the cut of $D_\zeta$ are $\Re s\gtrless0$, so for $\omega\in(-\omega_c,\omega_c)$ we set
    \begin{equation}
        D_{\zeta,\pm}(i\omega)\equiv\lim_{\epsilon\downarrow0}D_\zeta(\pm\epsilon+i\omega).
    \end{equation}

    For the faces of $F_\zeta$ across $[-\omega_c^2,0]$, whose two sides are $\Im z\gtrless0$, we set
    \begin{equation}
        F_{\zeta,\pm}(x)\equiv\lim_{\epsilon\downarrow0}F_\zeta(x\pm i\epsilon),
        \qquad x\in(-\omega_c^2,0).
    \end{equation}

    Using that $D_\zeta(s)=F_\zeta(s^2)$, for $0<\omega<\omega_c$ these boundary values are related\footnote{By the Plemelj formula \cite{Muskhelishvili1953}, the boundary value $F_{\zeta,\pm}(x)$ is independent of how $z$ approaches $x$ within the upper (lower) half-plane; see below.} by $D_{\zeta,\pm}(i\omega)=F_{\zeta,\pm}(-\omega^2)$.

    Because $j$ is H\"older continuous, the Plemelj formula \cite{Muskhelishvili1953} applied to \eq{eq:Phi} gives, for every $x\in(-\omega_c^2,0)$,
    \begin{equation}
        \Phi_\pm(x)=\frac12\operatorname{p.v.}\!\int_0^{\omega_c^2}\frac{j(\sqrt u)}{u-u_x}\,\dd u
        \mp\frac{i\pi}{2}\,j(\sqrt{u_x}),
    \end{equation}
    so, by \eq{eq:F-def},
    \begin{equation}\label{eq:F-lim}
        F_{\zeta,\pm}(x)=A_\zeta(x)\mp i\zeta\,j(\omega_x),
        \qquad \omega_x\equiv\sqrt{-x},
    \end{equation}
    with
    \begin{multline}\label{eq:A}
        A_\zeta(x)={\frac{3}{2\kappa}}(x+\Omega_0^2)\\
        -\frac{2}{\pi}\zeta\left({\mathcal G}-\frac12\operatorname{p.v.}\!\int_0^{\omega_c^2}\frac{j(\sqrt u)}{u-u_x}\,\dd u\right).
    \end{multline}
    Thus $\Im F_{\zeta,\pm}(x)=\mp\zeta j(\omega_x)\neq0$ on $(-\omega_c^2,0)$, and neither face vanishes there. At the endpoints,
    \begin{equation}\label{eq:A-lim}
        \lim_{x \, \uparrow \, 0} A_\zeta(x) = {\frac{3\Omega_0^2}{2\kappa}}, \qquad \lim_{x \, \downarrow \, -\omega_c^2} A_\zeta(x) = -\infty,
    \end{equation}
    since the principal value tends to ${\mathcal G}$ as $x\uparrow0$  and to $-\infty$ as $x\downarrow-\omega_c^2$.

    \medskip
    \emph{Zeros of $F_\zeta$ off the cut.}
    The poles of $\widetilde\varphi_\zeta$ off $i[-\omega_c,\omega_c]$ are the zeros of $D_\zeta$, equivalently the zeros of $F_\zeta$ off $[-\omega_c^2,0]$. We first exhibit two real zeros and then show they are the only ones. For $x<-\omega_c^2$, differentiating \eqref{eq:F-def} twice under the integral,
    \begin{equation}\label{eq:concave}
        F_\zeta''(x)=\frac{4\zeta}{\pi}\int_0^{\omega_c}
        \frac{\omega\,j(\omega)}{(x+\omega^2)^3}\,\dd\omega<0 ,
    \end{equation}
    so $F_\zeta$ is strictly concave on $(-\infty,-\omega_c^2)$, with $F_\zeta(x)\to-\infty$ both as $x\to-\infty$ and as $x\uparrow-\omega_c^2$. At $x_*\equiv-\Omega_0\omega_c\in(-\Omega_0^2,-\omega_c^2)$ the bound $x_*/(x_*+\omega^2)\le\Omega_0/(\Omega_0-\omega_c)$ for $\omega\in(0,\omega_c)$ gives
    \begin{equation}\label{eq:xstar}
        F_\zeta(x_*)\ge
        \frac{\Omega_0}{\Omega_0-\omega_c}
        \left({\frac{3}{2\kappa}}(\Omega_0-\omega_c)^2-\frac{2\zeta}{\pi}{\mathcal G}\right)>0,
    \end{equation}
    the last inequality being equivalent to $\zeta<\zeta_{\mathrm b}$, since
    \begin{equation}
        {
        \zeta_{\mathrm b}
        =\frac{3\pi(\Omega_0-\omega_c)^2}{4\kappa\mathcal G}.}
    \end{equation}
    A concave function negative at both ends of $(-\infty,-\omega_c^2)$ and positive at $x_*$ has exactly two simple zeros there. Since
    \begin{equation}
        F_\zeta(-\Omega_0^2)
        =-\frac{2\zeta}{\pi}\int_0^{\omega_c}\frac{j(\omega)}{\omega}\,
        \frac{\Omega_0^2}{\Omega_0^2-\omega^2}\,\dd\omega<0 ,
    \end{equation}
    they satisfy
    \begin{equation}\label{eq:F-roots}
        \xi_2\in(-\Omega_0^2,x_*),\qquad \xi_1\in(x_*,-\omega_c^2).
    \end{equation}

    These are the only zeros in $\mathbb C\setminus[-\omega_c^2,0]$. Apply the argument principle on the cycle made of $|z|=R$ (counterclockwise) and a $\delta$-neighborhood of $[-\omega_c^2,0]$ (clockwise), on which $F_\zeta\neq0$ for $R$ large and $\delta$ small by \eqref{eq:F-def} and \eqref{eq:F-lim}--\eqref{eq:A-lim}. The variation of $\arg F_\zeta$ is
    \begin{itemize}
        \item outer circle: ${F_\zeta(z)=\frac{3}{2\kappa}z+\Oh(1)}$, so $\Delta\arg=2\pi$,
        \item upper face, $x$ from $-\omega_c^2$ to $0$: by \eqref{eq:F-lim} the image lies in the lower half-plane and runs from $\arg\to-\pi$ (where $A_\zeta\to-\infty$) to $\arg=0$ (where $F_\zeta\to{3\Omega_0^2/(2\kappa)}$), so $\Delta\arg=\pi$,
        \item lower face: by $F_\zeta(\bar z)=\overline{F_\zeta(z)}$ this is the mirror image, $\Delta\arg=\pi$;
        \item the end arcs at $0$ and $-\omega_c^2$ contribute $o(1)$, since $F_\zeta\to{3\Omega_0^2/(2\kappa)}\neq0$ near $0$ and $|F_\zeta|\to\infty$ with bounded $\Im F_\zeta$ near $-\omega_c^2$.
    \end{itemize}
    Therefore $\tfrac1{2\pi}\Delta\arg F_\zeta=2$, so the zeros \eqref{eq:F-roots} are the only ones. Setting $\nu_k\equiv\sqrt{-\xi_k}\in(\omega_c,\Omega_0)$, the zeros of $D_\zeta(s)=F_\zeta(s^2)$ off $i[-\omega_c,\omega_c]$ are then $s=\pm i\nu_1,\pm i\nu_2$. They are simple because $D_\zeta'(s)=2sF_\zeta'(s^2)$ with $s\neq0$ and $F_\zeta'(\xi_k)\neq0$.

    \medskip
    \emph{Inversion.}
    Let $\Gamma_{R,\delta}$ be the positively oriented boundary of
    \begin{equation}\label{eq:region}
        \begin{split}
            \mathcal R_{R,\delta}\equiv{}
            &\bigl\{\,|s|<R,\ \Re s<\varsigma\,\bigr\}\\
            &\setminus\bigl\{\,\operatorname{dist}\!\bigl(s,\,i[-\omega_c,\omega_c]\bigr)\le\delta\,\bigr\}.
        \end{split}
    \end{equation}
    It comprises the Bromwich segment $\Re s=\varsigma$ (upward), the arc $|s|=R$ closing it counterclockwise through $\Re s<0$, and a keyhole $\mathcal C_\delta$ around the cut $i[-\omega_c,\omega_c]$ (clockwise). Take $R$ large and $\delta$ small, so that $\Gamma_{R,\delta}$ encloses the poles $\pm i\nu_1,\pm i\nu_2$.

    By the residue theorem,
    \begin{multline}\label{eq:res-finite}
        \frac{1}{2\pi i}\oint_{\Gamma_{R,\delta}}e^{s\eta}\widetilde\varphi_\zeta(s)\,\dd s\\
        =\sum_{p\in\{\pm i\nu_1,\pm i\nu_2\}}
        \operatorname{Res}_{s=p}\bigl(e^{s\eta}\widetilde\varphi_\zeta(s)\bigr).
    \end{multline}

    As $R\to\infty$ the decay $\widetilde\varphi_\zeta(s)=\Oh(1/s)$ kills the outer arc for $\eta>0$, and the Bromwich segment reconstructs \eqref{eq:inv-laplace}; as $\delta\to0$ the keyhole $\mathcal C_\delta$ collapses onto the two faces of the cut\footnote{The small arcs of $\mathcal C_\delta$ encircling the endpoints $\pm i\omega_c$ contribute nothing in the limit: $\widetilde\varphi_\zeta$ stays bounded there while their length is $\Oh(\delta)$.}. Hence
    \begin{multline}\label{eq:decomp}
        \varphi_\zeta(\eta)=
        \sum_{p\in\{\pm i\nu_1,\pm i\nu_2\}}
        \operatorname{Res}_{s=p}\bigl(e^{s\eta}\widetilde\varphi_\zeta(s)\bigr)\\
        +\frac{1}{2\pi}\int_{-\omega_c}^{\omega_c}e^{i\omega\eta}\,h_\zeta(\omega)\,\dd\omega,
    \end{multline}
    where $h_\zeta(\omega)\equiv\widetilde\varphi_{\zeta,+}(i\omega)-\widetilde\varphi_{\zeta,-}(i\omega)$.

    The four poles are simple and conjugate, and $\widetilde\varphi_\zeta(\bar s)=\overline{\widetilde\varphi_\zeta(s)}$, so
    \begin{multline}\label{eq:osc}
        \sum_{p\in P}\operatorname{Res}_{s=p}\bigl(e^{s\eta}\widetilde\varphi_\zeta\bigr)\\
        =\sum_{k\in\{1,2\}}\bigl[\,C_k\cos(\nu_k\eta)+S_k\sin(\nu_k\eta)\,\bigr],
    \end{multline}
    for certain real constants $C_k, S_k$.

    For the cut integral, the definition of $D_\zeta$ in \eq{eq:dem-D} gives $s\,\widetilde\gamma_\zeta(s)={\frac{3}{2\kappa}}(s^2+\Omega_0^2)-D_\zeta(s)$, so \eq{eq:varphi(z)} splits as
    \begin{equation}\label{eq:split}
        \widetilde\varphi_\zeta(s)=\frac{\varphi_0}{s}
        +\frac{{\frac{3}{2\kappa}}\dot\varphi_0-{\frac{3\Omega_0^2}{2\kappa}}\varphi_0/s}{D_\zeta(s)} .
    \end{equation}

    For $0<\omega<\omega_c$, using $D_{\zeta,\pm}(i\omega)=A_\zeta(-\omega^2)\mp i\zeta j(\omega)$ from \eq{eq:F-lim},
    \begin{equation}\label{eq:jump}
        \frac{1}{D_{\zeta,+}(i\omega)}-\frac{1}{D_{\zeta,-}(i\omega)}
        =\frac{2i\,\zeta\,j(\omega)}{A_\zeta(-\omega^2)^2+\zeta^2 j(\omega)^2},
    \end{equation}
    and so,
    \begin{equation}\label{eq:h}
        h_\zeta(\omega)= \left(\dot\varphi_0+\frac{i\Omega_0^2\varphi_0}{\omega}\right)
        \frac{{\frac{3i\zeta}{\kappa}}\,j(\omega)}{A_\zeta(-\omega^2)^2+\zeta^2 j(\omega)^2}.
    \end{equation}

    In any compact subinterval of $(0,\omega_c)$, $j$ is strictly positive, so $h_\zeta$ is continuous and thus integrable. Near $\omega=0$, $A_\zeta(-\omega^2)\to{3\Omega_0^2/(2\kappa)}\neq0$ and $j(\omega)=\Oh(\omega^\beta)$, so $h_\zeta=\Oh(\omega^{\beta-1})$, integrable for $\beta>0$. Near $\omega=\omega_c$, $A_\zeta(-\omega^2)$ diverges logarithmically, and $h_\zeta=\Oh\bigl(|\ln(\omega_c-\omega)|^{-2}\bigr)$, also integrable. Given that $h_\zeta(-\omega)=\overline{h_\zeta(\omega)}$, $h_\zeta\in L^1\left(-\omega_c,\omega_c\right)$. Note that, since the cut integral in \eq{eq:decomp} is an inverse Fourier transform of a hermitian integrable function, it is real function and vanishes as $\eta\to\infty$.

    Collecting \eqref{eq:decomp}, \eqref{eq:osc}, and \eqref{eq:h}, the evolution of $\varphi_\zeta$ corresponds to \eq{eq:final-thm}, with
    \begin{equation}\label{eq:final}
         \varphi_{\zeta,\mathrm{tr}}(\eta) =
        \frac{1}{2\pi}\int_{-\omega_c}^{\omega_c}e^{i\omega\eta}\,h_\zeta(\omega)\,\dd\omega.
    \end{equation}
\end{proof}

\section{Power-law spectral densities with a sharp cutoff}

\begin{lemma}\label{lem:X}
Let $0<\varepsilon<e^{-1}$ and $\chi\in C_c^{\infty}([0,\varepsilon))$ with $\chi\equiv1$ near $0$. As $\eta\to\infty$:

\smallskip
(a) if $g\in C^{N}((0,\varepsilon))$ satisfies
\begin{equation}\label{partial-bounds}
\partial^k g(\omega)
=
\Oh \left(\omega^{\beta-k}|\ln \omega|^\mu\right),
\quad \omega\downarrow0,
\end{equation}
for $k=0,\ldots,N$, and some $\mu, \beta \in\mathbb{R}$ with $0<\beta+1<N$, then
\begin{equation}
\int_0^\varepsilon
\chi(\omega)g(\omega)e^{-i\eta \omega}\,\dd \omega
=
\Oh \left(\eta^{-\beta-1}\ln^\mu \eta\right);
\end{equation}

\smallskip
(b) for $\alpha>-1$ and every $N \in \mathbb{N}$,
\begin{multline}
\int_0^{\varepsilon}\!\chi(\omega)\,\omega^{\alpha}e^{-i\eta \omega}\,\dd\omega\\
=\Gamma(\alpha{+}1)\,e^{-i\pi(\alpha+1)/2}\eta^{-\alpha-1}
+\Oh(\eta^{-N});
\label{X:erd}
\end{multline}

\smallskip
(c) for fixed $a>0$,
\begin{multline}
L_a(\eta)\equiv
 \int_0^{\varepsilon}\frac{\chi(\omega)}{\ln^{2}(\omega/a)+\pi^{2}}\,
e^{-i\eta \omega} \dd \omega\\
=\frac{-i}{\eta\,\big[\ln^{2}(a\eta)+\pi^{2}\big]}
+\Oh\left(\eta^{-1}\ln^{-3}\eta\right).
\end{multline}
\end{lemma}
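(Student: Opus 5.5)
All three parts are endpoint asymptotics of Fourier-type integrals. Because $\chi$ is smooth, equals $1$ near $0$, and vanishes near $\varepsilon$, the only contribution that survives to every polynomial order comes from $\omega=0$. My plan is to handle (b) first by rotating the contour, then (a) by a dyadic/integration-by-parts argument, and finally (c) by writing $1/(\ln^2+\pi^2)$ as an explicit Laplace-type integral and reducing to (b).

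\emph{Part (b).} Write $\chi\omega^\alpha=\omega^\alpha-(1-\chi)\omega^\alpha$ on $(0,\infty)$, with the exponential regularized by $e^{-\epsilon\omega}$, $\epsilon\downarrow0$. The first piece gives the standard value
\[
\int_0^\infty\omega^\alpha e^{-i\eta\omega}\,\dd\omega=\Gamma(\alpha+1)(i\eta)^{-\alpha-1}=\Gamma(\alpha+1)e^{-i\pi(\alpha+1)/2}\eta^{-\alpha-1},
\]
obtained by rotating the contour to the negative imaginary axis. The function $(1-\chi)\omega^\alpha$ is smooth, vanishes near $0$, and all its derivatives are integrable at infinity only in the regularized sense. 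Repeated integration by parts (boundary terms vanish, since $1-\chi$ vanishes near $0$ and the factor $e^{-\epsilon\omega}$ kills infinity) therefore gives $\Oh(\eta^{-N})$ uniformly in $\epsilon$. Equivalently, one can avoid infinity altogether by cutting at $\varepsilon$, where $\chi$ vanishes to all orders, which gives the same $\Oh(\eta^{-N})$.

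\emph{Part (a).} Split the integral at $\omega=1/\eta$.
\begin{itemize}
\item On $(0,1/\eta)$, the bound $|g|\lesssim\omega^\beta|\ln\omega|^\mu$ with $\beta>-1$ gives $\Oh(\eta^{-\beta-1}\ln^\mu\eta)$ directly.
\item On $(1/\eta,\varepsilon)$, integrate by parts $N$ times using $e^{-i\eta\omega}=(i\eta)^{-1}\partial_\omega(-e^{-i\eta\omega})$. Boundary terms at $1/\eta$ are of size $\eta^{-k-1}(1/\eta)^{\beta-k}\ln^\mu\eta=\eta^{-\beta-1}\ln^\mu\eta$, and there are none at $\varepsilon$. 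Derivatives falling on $\chi$ are supported away from $0$ and contribute $\Oh(\eta^{-N})$.
\item The final remainder is $\eta^{-N}\int_{1/\eta}^\varepsilon\omega^{\beta-N}|\ln\omega|^\mu\,\dd\omega\lesssim\eta^{-N}\eta^{N-\beta-1}\ln^\mu\eta$, because $\beta-N<-1$.
\end{itemize}
I still need to check that $|\ln\omega|^\mu$ is comparable to $\ln^\mu\eta$ at the endpoint $1/\eta$, including when $\mu<0$. In that case the integrand near the lower limit dominates, and the monotonicity of $|\ln\omega|^\mu$ on $(0,\varepsilon)$ with $\varepsilon<e^{-1}$ handles it. The hypothesis $\varepsilon<e^{-1}$ is what makes $|\ln\omega|\ge1$ there.

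\emph{Part (c).} This is the main obstacle, because the symbol is only logarithmically singular. I would use the identity, valid for $x>0$,
\[
\frac{1}{\ln^2 x+\pi^2}=\int_0^\infty\frac{x^{t}}{\Gamma(1+t)}\cdots,
\]
or, more robustly, the Stieltjes/Ramanujan-type representation
\[
\frac{1}{\ln^2 x+\pi^2}=\frac{1}{1+x}\cdots.
\]
A cleaner route is to note that $1/(\ln^2(\omega/a)+\pi^2)=\Im\bigl[1/(\ln(\omega/a)-i\pi)\bigr]/\pi$ and that $1/(\ln(\omega/a)-i\pi)=-\int_0^\infty(\omega/a)^{t}e^{i\pi t}\,\dd t$ for $\omega<a$. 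This converges since $|\omega/a|<1$ near $0$; the same formula works for $\omega>a$ on the support after a harmless rescaling, or because only small $\omega$ matters. Inserting this and applying (b) with $\alpha=t$ under the $t$-integral gives
\[
-\int_0^\infty a^{-t}e^{i\pi t}\,\Gamma(1+t)\,e^{-i\pi(1+t)/2}\,\eta^{-1-t}\,\dd t.
\]
Expanding $\Gamma(1+t)=1+\Oh(t)$ and writing $\lambda=\ln(a\eta)$, the leading term is
\[
i\eta^{-1}\int_0^\infty e^{-t(\lambda-i\pi/2)}\,\dd t=\frac{i}{\eta(\lambda-i\pi/2)}.
\]
Taking the appropriate combination (imaginary part structure, with the phases kept carefully) yields $-i/\bigl(\eta(\lambda^2+\pi^2)\bigr)$. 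The $\Oh(t)$ correction contributes $\Oh(\eta^{-1}\lambda^{-2})$ relative terms, i.e.\ $\Oh(\eta^{-1}\ln^{-3}\eta)$ after multiplication by $1/\lambda$ from the leading structure.

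The delicate points in (c) are twofold. First, I need uniformity of the $\Oh(\eta^{-N})$ remainder in (b) for $t$ ranging over $[0,\infty)$; since the integrand carries $e^{-t\ln(a\eta)}$, large $t$ is harmless, but the constants in the integration by parts grow like $\Gamma(t+1)$ times powers and must be controlled. Second, I must track the phases exactly so that the constant in front is precisely $-i$. If the integral representation proves awkward, the fallback is to apply (a)-style integration by parts directly to $g=1/(\ln^2+\pi^2)$, which satisfies the bounds \eqref{partial-bounds} with $\beta=0$ and $\mu=-2$. To extract the exact leading term rather than only its order, one then subtracts the explicit model computed by the Laplace representation.
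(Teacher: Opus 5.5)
Your parts (a) and (b) follow the paper's proof. In (a) you split at $\omega=\eta^{-1}$ and integrate by parts $N$ times on the outer piece; in (b) you use the Abel-regularized Gamma integral and integrate the smooth remainder by parts. One small fix in (a): for $\mu<0$, $|\ln\omega|^\mu$ is \emph{increasing} on $(\eta^{-1},\varepsilon)$. Monotonicity therefore only bounds it by $|\ln\varepsilon|^\mu$ there. To get $\eta^{N-\beta-1}\ln^\mu\eta$ for the remainder integral you need Karamata's theorem, as the paper uses, or a further split at $\omega=\eta^{-1/2}$, above which the contribution is polynomially smaller.

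Part (c) has genuine gaps.

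\emph{The $t$-integral diverges.} Its integrand has modulus $\eta^{-1}\Gamma(1+t)\,e^{-t\ln(a\eta)}$, which tends to $+\infty$ as $t\to\infty$ for every fixed $\eta$, because $\Gamma(1+t)$ grows faster than any exponential. The inner integral $\int\chi\,\omega^t e^{-i\eta\omega}\dd\omega$ is itself harmless, being $\Oh(\varepsilon^t)$. The trouble is that replacing it by its leading term $\Gamma(1+t)(i\eta)^{-1-t}$ is not uniform in $t$, so ``large $t$ is harmless'' is false. You would have to truncate at some $T\asymp\ln\ln\eta$ and apply (b) only for $t\le T$. The tail can then be bounded by one integration by parts: $\bigl|\int_0^{\varepsilon'}\tilde\chi\,\omega^t e^{-i\eta\omega}\dd\omega\bigr|\le C(\varepsilon')^{t}/\eta$, with $\tilde\chi$ supported in $[0,\varepsilon')$ and $\varepsilon'<a$. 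This localization is needed anyway, since the representation fails for $\omega\ge a$.

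\emph{The leading coefficient is never derived.} Taking $\Im$ pointwise in $\omega$ does not commute with multiplying by $e^{-i\eta\omega}$ and integrating. The pieces $1/(\ln(\omega/a)\pm i\pi)$ must therefore be treated separately. Incidentally, $-\int_0^\infty(\omega/a)^t e^{i\pi t}\dd t=1/(\ln(\omega/a)+i\pi)$, not $1/(\ln(\omega/a)-i\pi)$.
\begin{itemize}
\item Each piece has size $\eta^{-1}\lambda^{-1}$, with $\lambda=\ln(a\eta)$; your $i/[\eta(\lambda-i\pi/2)]$ is one of them. That is one logarithm larger than the answer.
\item Each piece's $\Gamma(1+t)=1+\Oh(t)$ correction is $\Oh(\eta^{-1}\lambda^{-2})$, the same order as the target.
\item The value $-i/[\eta(\lambda^2+\pi^2)]$ emerges only from a cancellation between the two pieces, which ``taking the appropriate combination'' skips.
\end{itemize}
Done consistently, you would start from $[\ln^2(\omega/a)+\pi^2]^{-1}=\pi^{-1}\int_0^\infty(\omega/a)^t\sin(\pi t)\,\dd t$ for $\omega<a$. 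There $\sin(\pi t)=\Oh(t)$ pushes every correction down one logarithm, and with the truncation above your route gives $-i/[\eta(\lambda-i\pi/2)(\lambda+3i\pi/2)]$. This matches the claim up to $\Oh(\eta^{-1}\lambda^{-3})$. Your fallback, applying (a) with $(\beta,\mu)=(0,-2)$, gives only the order $\Oh(\eta^{-1}\ln^{-2}\eta)$, not the coefficient.

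The idea that makes (c) short in the paper is to integrate by parts once first. With $f=[\ln^2(\omega/a)+\pi^2]^{-1}$, $\chi f$ vanishes at both endpoints, so $L_a=-\frac{i}{\eta}\int_0^\varepsilon e^{-i\eta\omega}(\chi f)'\,\dd\omega$; this is where the factor $-i$ comes from.
\begin{itemize}
\item On $(0,\eta^{-1})$, the bounds $|e^{-i\eta\omega}-1|\le\eta\omega$ and $|f'|\lesssim\omega^{-1}|\ln\omega|^{-3}$ give $\int_0^{1/\eta}e^{-i\eta\omega}f'\,\dd\omega=f(\eta^{-1})+\Oh(\ln^{-3}\eta)$, with $f(\eta^{-1})=[\ln^2(a\eta)+\pi^2]^{-1}$ exactly.
\item On $(\eta^{-1},\varepsilon)$, a second integration by parts, using $f'(\eta^{-1})=\Oh(\eta\ln^{-3}\eta)$, gives $\Oh(\ln^{-3}\eta)$.
\end{itemize}
No integral representation, truncation, or cancellation is needed.
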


\begin{proof}[Proof of Lemma~\ref{lem:X}]
(a) Follows from splitting the integral at $\omega = \eta^{-1}$, the bounds of \eq{partial-bounds}, and the asymptotic equivalences
\begin{equation}\label{eq:first-endpoint}
\int_0^{1/\eta}\omega^\beta|\ln \omega|^\mu\dd \omega
\sim
\frac{1}{\beta+1}
\eta^{-\beta-1} \ln^\mu \eta,
\end{equation}
and
\begin{multline}\label{eq:second-endpoint}
\int_{1/\eta}^{\varepsilon}\omega^{\beta-N} |\ln \omega|^\mu\dd \omega
\\
\sim
\frac{1}{N-\beta-1}
\eta^{N-\beta-1}\ln^\mu \eta
\end{multline}
for $\eta \to \infty$, given by Karamata's integral theorem \cite[Prop.1.5.8]{Bingham:1989rv}.

(b) Follows from
\begin{multline}\label{eq:I0-note}
\int_0^{\infty}\chi(\omega)\,\omega^{\alpha}e^{-i\eta \omega}\,\dd\omega
=\int_0^{\infty}\omega^{\alpha}e^{-i\eta \omega}\,\dd\omega\\
+\int_0^{\infty}\big(\chi(\omega)-1\big)\,\omega^{\alpha}e^{-i\eta \omega}\,\dd \omega\\
=\Gamma(\alpha{+}1)\,e^{-i\pi(\alpha+1)/2}\eta^{-\alpha-1}+\Oh(\eta^{-N}),
\end{multline}
interpreted as Abel-regularized integrals; the remainder follows by repeated integration by parts.

(c) $\chi f$, $f\equiv[\ln^{2}(\omega/a)+\pi^{2}]^{-1}$, vanishes at both endpoints, so one integration by parts gives $L_a=-\tfrac{i}{\eta}\int_0^{\varepsilon}e^{-i\eta \omega}(\chi f)' \dd \omega$, with $f'=-2\ln(\omega/a)\big/\big\{\omega[\ln^{2}(\omega/a)+\pi^{2}]^{2}\big\}$. On $(0,\eta^{-1})$, $|e^{-i\eta \omega}-1|\leq\eta \omega$ gives $\int_0^{\eta^{-1}}e^{-i\eta \omega}f'\dd \omega=f(\eta^{-1})+\Oh(\ln^{-3}\eta)=[\ln^{2}(a\eta)+\pi^{2}]^{-1}+\Oh(\ln^{-3}\eta)$; on $(\eta^{-1},\varepsilon)$, one further integration by parts, using $f'(\eta^{-1})=\Oh(\eta\ln^{-3}\eta)$, gives $\int_{\eta^{-1}}^\varepsilon|(\chi f')'|\,\dd\omega=\Oh(\eta\ln^{-3}\eta)$.
\end{proof}

\begin{proof}[Proof of Theorem~\ref{thm:X}]
Since $h_\zeta(-\omega)=\overline{h_\zeta(\omega)}$, Eqs.~\eqref{eq:h} and~\eqref{eq:final} give
\begin{multline}\label{X:cut}
\varphi_{\zeta,\mathrm{tr}}(\eta)=-\frac{1}{\pi}\int_{0}^{\omega_c}
\frac{{3\zeta/\kappa}\,\omega^p}{Q(\omega)}
\Big[\frac{\Omega_0^{2}\varphi_0}{\omega}\cos(\omega\eta)
\\[-2pt]
\quad+\dot\varphi_0\sin(\omega\eta)\Big]\,\dd\omega,
\end{multline}
where $Q(\omega) \equiv A_\zeta(-\omega^{2})^{2}+\zeta^{2}\omega^{2p}$, whose late-time behavior is governed by the endpoints $\omega=0$ and $\omega=\omega_c$, the interior contributing faster-decaying terms.

A partition of unity $\chi_0+\chi_m+\chi_c=1$ on $[0,\omega_c]$, with $\chi_0\equiv1$ near $0$, $\chi_c\equiv1$ near $\omega_c$, and $\chi_m\in C_c^{\infty}\big((0,\omega_c)\big)$, splits \eq{X:cut} as $\varphi_{\zeta,\mathrm{tr}}=I_0+I_m+I_c$. Since the integrand is smooth and $\operatorname{supp}\chi_m$ is compact in $(0,\omega_c)$, $I_m=\Oh(\eta^{-N})$ for every $N$.

\smallskip
\emph{Zero-frequency endpoint.} Noting that for $\omega\downarrow 0$, $A_\zeta(-\omega^2)\to{3\Omega_0^2/(2\kappa)}\neq0$, we define $r\equiv Q^{-1}-{4\kappa^2/(9\Omega_0^{4})}$ and write the zero-frequency contribution as
\begin{equation}
I_0(\eta)=I_0^{(0)}(\eta)+I_0^{(r)}(\eta),
\end{equation}
where $I_0^{(0)}$ is obtained by replacing $Q^{-1}$ with ${4\kappa^2/(9\Omega_0^4)}$. Lemma~\ref{lem:X}(b), applied at $\alpha=p-1, p$, gives
\begin{align}
I_0^{(0)}(\eta)=&-{\frac{4\kappa\zeta\,\Gamma(p)\cos(\tfrac{\pi p}{2})}
{3\pi\Omega_0^{2}}}\,\varphi_0\,\eta^{-p}
\nonumber\\
&-{\frac{4\kappa\zeta\,\Gamma(p{+}1)\cos(\tfrac{\pi p}{2})}
{3\pi\Omega_0^{4}}}\,\dot\varphi_0\,\eta^{-p-1}
+\Oh(\eta^{-N}).
\end{align}

It remains to estimate the contribution containing $r$, $I_0^{(r)}(\eta)$. In the notation of \eq{eq:A}, the substitution $u=\nu^{2}$ gives
\begin{equation}
A_\zeta(-\omega^{2})={\frac{3}{2\kappa}}(\Omega_0^{2}-\omega^{2})+\frac{2\zeta}{\pi}\,\Delta(\omega),
\label{X:A}
\end{equation}
\begin{equation}
\Delta(\omega)\equiv\omega^{2}\,\mathrm{p.v.}\!\int_0^{\omega_c}
\frac{\nu^{p-1}}{\nu^{2}-\omega^{2}}\,\dd\nu.
\label{X:Delta}
\end{equation}
With $q\equiv\min\{p,2\}$,
\begin{equation}
|\partial_\omega^{k}\Delta|\leq
C_k\,\omega^{q-k}\big[1+|\ln(\omega)|\big],
\qquad 0<\omega<\varepsilon.
\label{X:sym}
\end{equation}
Indeed, for $0<p<2$ the substitution $\nu=\omega u$ gives $\Delta=h_p\omega^{p}+\omega^{2}B(\omega^{2})$, with $h_p\equiv\mathrm{p.v.}\!\int_0^{\infty}u^{p-1}(u^{2}-1)^{-1}\dd u$ and $B(z)\equiv-\int_{\omega_c}^{\infty}\nu^{p-1}(\nu^{2}-z)^{-1}\dd\nu$ analytic near $z=0$; for $p=2$, explicitly $\Delta=\tfrac{\omega^{2}}{2}\ln[(\omega_c^{2}-\omega^{2})/\omega^{2}]$; for $p>2$ the identity $\nu^{p-1}=\nu^{p-3}(\nu^{2}-\omega^{2})+\omega^{2}\nu^{p-3}$ yields $\Delta_p=\tfrac{\omega_c^{p-2}}{p-2}\,\omega^{2}+\omega^{2}\Delta_{p-2}$, iterating down to the previous cases.

Since {$Q\geq9\Omega_0^{4}/(8\kappa^2)$} near $0$, the derivative bound~\eqref{X:sym} transfers to $r$. The two $r$-dependent amplitudes are $\chi_0\omega^{p-1}r$ and $\chi_0\omega^{p}r$. Lemma~\ref{lem:X}(a) therefore gives the bounds $\Oh(\eta^{-p-q}\ln\eta)$ and $\Oh(\eta^{-p-q-1}\ln\eta)$, respectively. Consequently,
\begin{multline}\label{X:I0}
I_0^{(r)}(\eta) = \varphi_0\,\Oh\big(\eta^{-p-q}\ln\eta\big)\\
+\dot\varphi_0\,\Oh\big(\eta^{-p-q-1}\ln\eta\big).
\end{multline}
Define $s(\eta)\equiv I_0(\eta)+I_m(\eta)$. Since $I_m=\Oh(\eta^{-N})$ for every $N$, \eq{X:I0} proves \eq{X:s}.

\smallskip
\emph{Sharp-cutoff endpoint.} We now analyze the contribution localized near $\omega=\omega_c$. Subtracting the pole of the principal value,
\begin{align}
\Delta(\omega)=&\int_0^{\omega_c}\frac{1}{\nu-\omega}
\Big[\frac{\omega^{2}\nu^{p-1}}{\nu+\omega}
-\frac{\omega^{p}}{2}\Big]\dd\nu
\nonumber\\
&+\frac{\omega^{p}}{2}\,
\ln\Big(\frac{\omega_c-\omega}{\omega}\Big),
\label{X:pole}
\end{align}
where the bracket vanishes at $\nu=\omega$, so the first term is smooth. Hence, by \eq{X:A}, we may express
\begin{equation}
A_\zeta(-\omega^{2})=\frac{\zeta}{\pi}\,\omega^{p}
\ln\Big(\frac{\omega_c-\omega}{\omega_c}\Big)+\beta(\omega),
\label{X:Aedge}
\end{equation}
with $\beta$ smooth near $\omega_c$; in particular the limit in \eq{X:betac} exists, with $\beta_c=\beta(\omega_c)$. Introducing the variable $x\equiv(\omega_c-\omega)/\omega_c$, $A_\zeta = c(x)\ln x+b(x)$, with $c(x)\equiv\tfrac{\zeta}{\pi}\omega_c^{p}(1-x)^{p}$ and $b \equiv \beta\bigl(\omega_c(1-x)\bigr)$. Additionally, defining
\begin{equation}\label{X:shift}
\ln x^* \equiv-\frac{b(0)}{c(0)}, \quad 
\ell \equiv\ln\frac{x}{x^*},
\quad
b_{\mathrm r} \equiv b-\frac{b(0)}{c(0)}\,c,
\end{equation}
we reexpress $A_\zeta=c\,\ell+b_{\mathrm r}$, with $b_{\mathrm r}(0)=0$. By definition $\pi^{2}c(x)^{2} = \zeta^{2}\omega^{2p}$, so 
\begin{align}
Q&=c^{2}\,\bigl[\ell^{2}+\pi^{2}\bigr]\,D,
\nonumber\\ 
D&\equiv1+\frac{2b_{\mathrm r}}{c}\,\frac{\ell}{\ell^{2}+\pi^{2}}
+\frac{b_{\mathrm r}^{2}}{c^{2}}\,\frac{1}{\ell^{2}+\pi^{2}}.
\label{X:D}
\end{align}

Note that, since $D\to1$ as $x\downarrow0$, after reducing $\varepsilon$ if necessary we may assume $D\geq\tfrac12$ on $(0,\varepsilon)$. Choosing $\operatorname{supp}\chi_c \subset [\omega_c(1-\varepsilon),\omega_c]$ and defining $W\equiv{3\zeta/\kappa}\,\omega^{p}/Q$,
\begin{multline}\label{Ic}
    I_c(\eta)
    ={}
    -\frac{\omega_c}{\pi}
    \int_0^\varepsilon
    \chi_{c}\bigl(\omega_c(1-x)\bigr)
    W\bigl(\omega_c(1-x)\bigr)
    \\
    \times
    \Big[
    \frac{\Omega_0^2\varphi_0}{\omega_c(1-x)}
    \cos\bigl(\eta \, \omega_c (1-x)\bigr)\\
    +
    \dot\varphi_0
    \sin\bigl(\eta \, \omega_c (1-x)\bigr)
    \Big]\dd x.
\end{multline}

$W={3\pi^{2}/\kappa}\,(1-x)^{-p}\big/\bigl\{\zeta\omega_c^{p}\,[\ell^{2}+\pi^{2}]\,D\bigr\}$ splits as
\begin{align}
W\big(\omega_c(1-x)\big)
&={\frac{3\pi^{2}}{\kappa\zeta\omega_c^{p}}}\,\frac{1}{\ell^{2}+\pi^{2}}+\rho(x),
\label{X:W}\\
\rho(x)&\equiv{\frac{3\pi^{2}}{\kappa\zeta\omega_c^{p}}}\,
\frac{(1-x)^{-p}D^{-1}-1}{\ell^{2}+\pi^{2}},\nonumber
\end{align}
so, substituting \eq{X:W} into \eq{Ic}, we write the sharp-cutoff contribution as
\begin{equation}\label{Ic-decomp}
    I_c(\eta) = I_c^{(0)}(\eta)+I_c^{(\rho)}(\eta),
\end{equation}
where $I_c^{(0)}$ is obtained by replacing $W$ with {$\frac{3\pi^2}{\kappa\zeta\omega_c^p[\ell^2+\pi^2]}$}.

To evaluate $I_c^{(0)}$, we decompose
\begin{equation}
    \frac{1}{1-x} = 1+\frac{x}{1-x}.
\end{equation}

The integrand of $I_c^{(0)}$ containing $x/(1-x)$ has an amplitude equal to a smooth function times $x\,[\ell^{2}+\pi^{2}]^{-1}$, which by Lemma~\ref{lem:X}(a), with $(\beta,\mu)=(1,-2)$, contributes as
$\Oh\left( \eta^{-2}\ln^{-2}(\eta) \right)$.

Consequently,
\begin{equation}\label{Ic0-1}
    \begin{aligned}
I_c^{(0)}(\eta)
={}&
-{\frac{3\pi\omega_c^{1-p}}{\kappa\zeta}}
\bigg[
\frac{\Omega_0^2\varphi_0}{\omega_c}
\Re\left(
e^{i\omega_c \eta}L_{x^*}(\omega_c \eta)
\right)
\\
&
\hspace{5em}
+
\dot\varphi_0
\Im\left(
e^{i\omega_c \eta}L_{x^*}(\omega_c \eta)
\right)
\bigg]
\\
&\hspace{5em}
+ \Oh\left( \eta^{-2}\ln^{-2}(\eta) \right),
\end{aligned}
\end{equation}
where $L_a$ is defined in Lemma \ref{lem:X}(c), here with $a=x^*$. Since $\ln(x^*\omega_c\eta)=\ln(\eta/\eta_*)$, the Lemma gives
\begin{equation}\label{I0c-rel}
\begin{split}
\Re\left(e^{i\omega_c \eta}L_{x^*}(\omega_c \eta)\right)
=&
\frac{\sin(\omega_c \eta)}{\omega_c \eta\,\bigl[\ln^{2}(\eta/\eta_*)+\pi^{2}\bigr]}\\
& \qquad \qquad+
\Oh\left(\eta^{-1}\ln^{-3} \eta \right),
\\
\Im\left(e^{i\omega_c \eta}L_{x^*}(\omega_c \eta)\right)
=&
-\frac{\cos(\omega_c \eta)}{\omega_c \eta\,\bigl[\ln^{2}(\eta/\eta_*)+\pi^{2}\bigr]}\\
& \qquad \qquad +
\Oh\left( \eta^{-1}\ln^{-3}\eta \right).
\end{split}
\end{equation}

Substituting the relations of \eq{I0c-rel} into \eq{Ic0-1} we obtain
\begin{equation}\label{X:Ic0}
    \begin{aligned}
        I_c^{(0)}(\eta)
        ={}&
        {\frac{3\pi}{\kappa\zeta\omega_c^{p}}}\,
        \frac{
        \dot\varphi_0\cos(\omega_c\eta)
        -
        \dfrac{\Omega_0^{2}\varphi_0}{\omega_c}
        \sin(\omega_c\eta)
        }{
        \eta\,\bigl[\ln^{2}(\eta/\eta_*)+\pi^{2}\bigr]
        }
        \\
        &+
        \Oh\left(
        \eta^{-1}\ln^{-3}(\eta)
        \right).
    \end{aligned}
\end{equation}

To evaluate $I_c^{(\rho)}$, note that the factor $(1-x)^{-p}$ is smooth and equals $1$ at $x=0$, so $(1-x)^{-p}D^{-1}-1$ obeys the same bounds as $D^{-1}-1$ up to terms carrying an extra factor of $x$. By \eq{X:D}, $D-1$ is a sum of functions smooth near $x=0$ and vanishing at $x=0$, weighted by $\ell/(\ell^{2}+\pi^{2})$ and $(\ell^{2}+\pi^{2})^{-1}$. Since $|\partial_x^{k}[\ell/(\ell^{2}+\pi^{2})]|=\Oh\bigl(x^{-k}(|\ln x|)^{-1}\bigr)$, $|\partial_x^{k}(\ell^{2}+\pi^{2})^{-1}|=\Oh\bigl(x^{-k}(|\ln x|)^{-2}\bigr)$, and $D\geq\tfrac12$, Leibniz' rule gives $|\partial_x^{k}(D^{-1}-1)|\leq C_k\,x^{-k}(|\ln x|)^{-1}$ near $x=0$. Therefore, for  $\varepsilon<1$,
\begin{equation}\label{X:rho}
|\partial_x^{k}\rho(x)|\leq C_k\,x^{-k}\,(|\ln x|)^{-3}, \quad x\in (0,\varepsilon).
\end{equation}

Since $(1-x)^{-1}$ is smooth on $[0,\varepsilon]$, multiplication by this factor preserves the derivative bounds in \eq{X:rho}, so Lemma~\ref{lem:X}(a), applied with $(\beta,\mu)=(0,-3)$, yields
\begin{equation}\label{Icrho}
    I_c^{(\rho)}(\eta) =
    \Oh\left(\eta^{-1}\ln^{-3}(\eta)\right).
\end{equation}

Equations \eqref{Ic-decomp}, \eqref{X:Ic0} and \eqref{Icrho}, and the harmonic identity $\dot\varphi_0\cos y-\dfrac{\Omega_0^{2}\varphi_0}{\omega_c}\sin y=R_0\cos(y+\delta)$ imply
\begin{equation}\label{Ic-final}
    I_c=a_c(\eta)\cos(\omega_c\eta+\delta)+\Oh(\eta^{-1}\ln^{-3}(\eta)).
\end{equation}

Together with the zero-frequency endpoint contribution, this proves \eqref{X:main}.

\end{proof}